\theoremstyle{definition}
\newtheorem{defn}{\protect\definitionname}
\theoremstyle{remark}
\newtheorem{rem}{\protect\remarkname}
\theoremstyle{definition}
\newtheorem{example}{\protect\examplename}
\theoremstyle{remark}
\newtheorem{claim}{\protect\claimname}
\renewcommand{\textemdash}{---}
\providecommand{\claimname}{Claim}
\providecommand{\definitionname}{Definition}
\providecommand{\examplename}{Example}
\providecommand{\remarkname}{Remark}
\begin{document}
	
	\title{Fracton Self-Statistics}
	
	\author{Hao Song}
	\affiliation{CAS Key Laboratory of Theoretical Physics, Institute of Theoretical Physics, Chinese Academy of Sciences, Beijing 100190, China}
	\affiliation{Department of Physics and Astronomy, McMaster University, Hamilton,	Ontario L8S 4M1, Canada}
	
	\author{Nathanan Tantivasadakarn}
	\affiliation{Walter Burke Institute for Theoretical Physics, California Institute of Technology, Pasadena, CA 91125, USA}
	\affiliation{Department of Physics, California Institute of Technology, Pasadena, CA 91125, USA}
	\affiliation{Department of Physics, Harvard University, Cambridge, MA 02138, USA}
	
	\author{Wilbur Shirley}
	
	\affiliation{School of Natural Sciences, Institute for Advanced Study, Princeton,
		NJ 08540, USA}
	
	\affiliation{Department of Physics, California Institute of Technology, Pasadena, CA 91125, USA}
	\affiliation{Institute for Quantum Information and Matter,
		California Institute of Technology, Pasadena, CA 91125, USA}
	
	\author{Michael Hermele}
	
	\affiliation{Department of Physics and Center for Theory of Quantum Matter, University
		of Colorado, Boulder, CO 80309, USA}
	
	\date{\today}
	\begin{abstract}
		Fracton order describes novel quantum phases of matter that host quasiparticles with restricted mobility, and thus lies beyond the existing paradigm of topological order. In particular, excitations that cannot move without creating multiple excitations are called fractons. Here we  address a fundamental open question---can the notion of self-exchange statistics be naturally defined for fractons, given their complete immobility as isolated excitations? Surprisingly, we demonstrate how fractons can be exchanged, and show that their self-statistics is a key part of the characterization of fracton orders. We derive general constraints satisfied by the fracton self-statistics in a large class of Abelian fracton orders. Finally, we show the existence of nontrivial fracton self-statistics in some twisted variants of the checkerboard model and Haah's code, establishing that these models are in distinct quantum phases as compared to their untwisted cousins.  
	\end{abstract}
	\maketitle
	
	\paragraph{Introduction.}
	Particle statistics is a fundamental aspect of quantum mechanics. While elementary particles that compose our universe must be either bosons or fermions due to the topological triviality of double exchanges in 3D space, 
	emergent quasiparticles in 2D quantum many-body systems can exhibit anyonic statistics~\citep{Arovas1984,Kitaev2006}, which  are crucial for characterizing conventional topological order. Recently, the theoretical discovery of fracton order in 3D \citep{Chamon2005,Bravyi2011,Haah2011,Yoshida_Fractal,Vijay2015,Vijay2016,Pretko2017}
	has revealed a new situation where quasiparticles lack their usual freedom to move in space, calling for a reexamination of the notion of statistics~\cite{Ma2017, TwistedFracton, Pai2019}. 
	
	Fracton systems have emerged as an active frontier of quantum physics~\cite{NandkishoreRev2019, PretkoRev2020}, attracting great interest from condensed matter, quantum information and quantum field theory viewpoints. Fracton order is defined by the emergence of quasiparticles with restricted mobility, including \emph{fractons}, which cannot move without splitting into more than one excitation. Single isolated fractons are thus immobile. Fracton models can also host excitations which are mobile only within planes or lines. Statistical processes involving or interpretable in terms of partially mobile excitations have been studied~\citep{Ma2017,Slagle2017,Vijay2017,TwistedFracton,CageNet,Shirley2019,Bulmash2019,Pai2019,You2020,ShirleyFF,TJV1,Shirley23}.
	Moreover, fractons can be non-Abelian in the sense of carrying protected topological degeneracy~\citep{Vijay2017,TwistedFracton,BulmashBarkeshli2019,PremWilliamson2019,WangXu19,WangXuYau19,WangYau19,AasenBulmashPremSlagleWilliamson20,Wen20,Wang20,WilliamsonCheng20,StephenGarre-RubioDuaWilliamson2020,TJV2,TuChang21}.
	Nevertheless, a fundamental question remains open: does a notion of self-exchange statistics make sense for fractons, given their complete immobility as isolated excitations?

	In this Letter, we provide a resolution to this puzzle. By allowing the fracton quasiparticle to split into multiple coordinated pieces, it is possible to prepare two well-separated realizations of the same fracton superselection sector. Such a pair of excitation patterns can be physically exchanged, giving rise to a fracton self-statistics. Our findings apply to both fracton phases of foliated \cite{ShirleyFoliated,GeneralizedFoliation} and fractal \cite{Haah2011,Yoshida_Fractal} nature. Furthermore, we point out instances where the self-statistics of fractons is in fact the only known statistical invariant that distinguishes between two fracton phases. We provide explicit examples by distinguishing twisted checkerboard models~\citep{TwistedFracton} and a twisted Haah's code~\citep{PhysRevResearch.2.023353} from their untwisted counterparts. Thus, we show that fracton self-statistics is a fundamental invariant needed to characterize fracton phases of matter.

	\paragraph{Foliated fractons.}To illustrate the principle, we start with the simplest relevant setting, in which all fractons $a$ are Abelian \cite{NAbelian} and satisfy the fusion constraint
	\begin{equation}
	a\times \overline{\prescript{\mathbf{t}_{\mu}}{}{a}} \times \prescript{\mathbf{t}_{\mu}+\mathbf{t}_{\nu}}{}{a}\times \overline{\prescript{\mathbf{t}_{\nu}}{}{a}}=1
	\label{eq:3foliatedfusion}
	\end{equation}
	for all $\mu,\nu\in\{x,y,z\}$ such that $\mu\neq\nu$, 
	where $\mathbf{t}_{\mu\,(\nu)}$ is the elementary lattice vector in the $\mu\,(\nu)$ direction, 
	$\prescript{\mathbf{t}}{}{a}$  
	denotes the analogue of $a$ at a $\mathbf{t}$-shifted position, and
	$\overline{\prescript{\mathbf{t}}{}{a}}$ is the antiparticle of $\prescript{\mathbf{t}}{}{a}$.
	This constraint guarantees the existence of rectangular~\cite{FoliationDirection} 
	membrane operators of arbitrary size that generate quadrupolar configurations of a given species $a$ at its corners. 
	Fractons satisfying the fusion constraint will be referred to as (Abelian) \textit{foliated}.

	A large body of models hosting foliated fractons are known in the literature, including the X-cube, checkerboard, and their many variants~\cite{Vijay2016,TwistedFracton,CageNet,you2020symmetric,ShirleySlagleChen20,DevakulShirleyWang20}. 
	Let us refer to the checkerboard model as a concrete example, for its twisted variants will clearly demonstrate the usage of fracton self-statistics. 
	
	\begin{figure}
		\includegraphics[width=1\columnwidth]{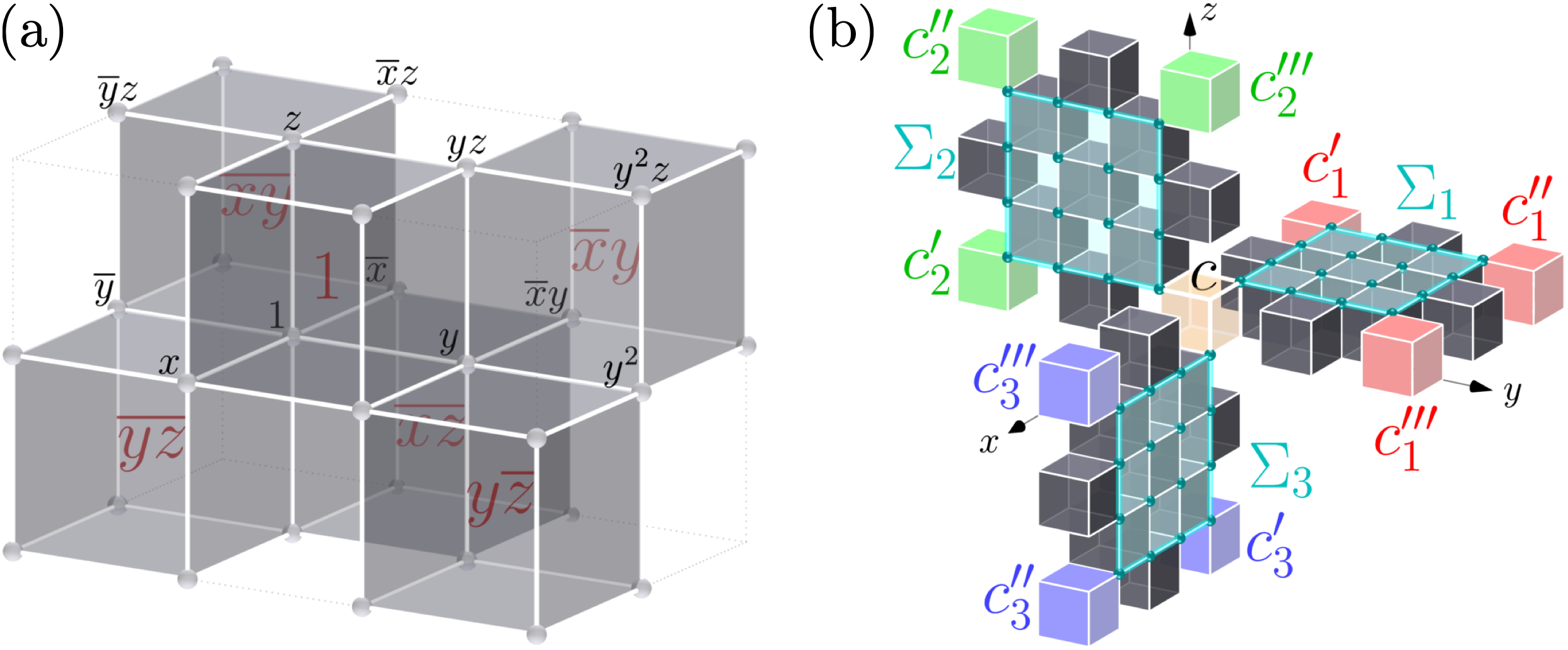}
		
		\caption{
			(a) 3D checkerboard lattice with vertices and gray cubes labeled by monomials $x^{j}y^{k}z^{l}$, where $\overline{x}\equiv x^{-1}$ etc.  
			(b) In the checkerboard model, excitation $A_{c}=-1$ (or \textbf{$B_{c}=-1$}) has \emph{fractional mobility}
			$\left\{ c\right\} \rightarrow \{ c_{i}^{\prime},c_{i}^{\prime\prime},c_{i}^{\prime\prime\prime} \} $,
			realizable by operator $M_{i}=\prod_{v\in\Sigma_{i}}Z_{v}$ (respectively,
			$M_{i}=\prod_{v\in\Sigma_{i}}X_{v}$) supported on rectangular membrane $\Sigma_{i}$, for $i=1,2,3$.}
		
		\label{fig:checkerboard}
	\end{figure}
	
	The \textit{checkerboard model}~\cite{Vijay2016} is defined on a 3D checkerboard lattice (Fig.~\ref{fig:checkerboard}a) with one qubit per vertex $v$. Its Hamiltonian 
	\begin{equation}
	H_{\mathrm{cb}}=-\sum_{c}\left(A_{c}+B_{c}\right) \label{eq:Hcb}
	\end{equation}
	is a summation over gray cubes $c$ in Fig.~\ref{fig:checkerboard}(a), where 
	\begin{equation}
	A_{c}\coloneqq\prod_{v\in c}X_{v},\qquad B_{c}\coloneqq\prod_{v\in c}Z_{v},
	\end{equation}
	are products of Pauli $X$ or $Z$ operators at the eight vertices of $c$. 
	This is an exactly solvable gapped model with spectrum labeled by simultaneous eigenvalues $\left\{ A_{c},B_{c}=\pm1\right\} $.
	
	An isolated excitation $A_{c}=-1$ exemplifies a foliated fracton. It
	can be ``moved'' at the expense of fractionalizing into more than
	one excitation, e.g., $\eta=\left\{ c\right\} \rightarrow\eta_{i}= \{ c_{i}^{\prime},c_{i}^{\prime\prime},c_{i}^{\prime\prime\prime} \} $ by a rectangular membrane operator; see Fig.~\ref{fig:checkerboard}(b).  Therefore, the excitation patterns
	$\eta_{1}$ (red), $\eta_{2}$ (green), $\eta_{3}$ (blue), and $\eta$ (orange) are all realizations
	of the same fracton superselection sector. 
	
	\paragraph{Self-statistics of foliated fractons.}Generically, a foliated fracton $a$ is characterized by a set of four self-statistical phases $\theta^{[xyz]}_a$, $\theta^{[x\overline{yz}]}_a$, $\theta^{[\overline{x}y\overline{z}]}_a$, and $\theta^{[\overline{xy}z]}_a$, 
	each corresponding to a ``windmill'' self-exchange process. 
	
	The process corresponding to $\theta^{[xyz]}_a$ is depicted in Fig.~\ref{fig:Process}. 
	It begins with an excited state with $a$ at the center of the windmill, in addition to a triplet of excitations denoted $\widehat{a}$ that belongs to the same superselection sector as $a$. The process proceeds with a sequence of six membrane operators (Fig.~\ref{fig:Windmill}a) whose total action exchanges $a$ with $\widehat{a}$, returning to the starting state in such a way that all arbitrary phases cancel. It can be regarded as a fractonic generalization of the T-shaped anyon exchange process \citep{Tprocess}.
	
	\begin{figure}
		\centering
		\includegraphics[width=.48\textwidth]{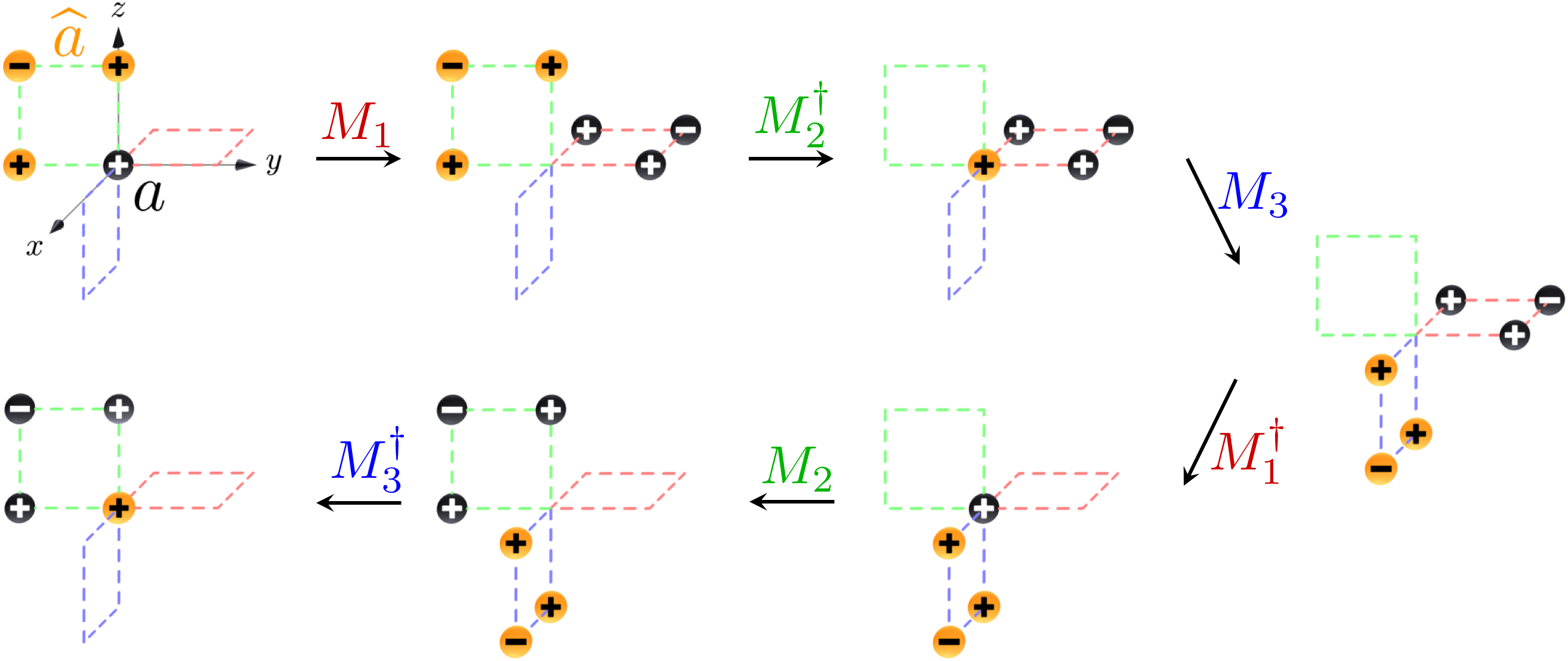}
		\caption{The $[xyz]$ windmill process. The starting state has an excitation $a$ at the center of the ``windmill'', along with three other excitations collectively called $\widehat{a}$ in the same superselection sector as $a$ following Eq.~\eqref{eq:3foliatedfusion}. The process involves three membrane operators $M_1$, $M_2$, $M_3$, and their inverses, successively moving the four excitations from the corners of the $yz$ square, to the corners of the $xy$ square, to the corners of the $zx$ square, and finally back to the original configuration. The process is designed such that the phase arbitrariness in the choice of $M_i$ is precisely cancelled by the action of $M_i^\dagger$. Therefore, the universal statistical phase is well-defined by $\theta_a^{[xyz]}=M_3^\dagger M_2 M_1^\dagger M_3 M_2^\dagger M_1$.}
		\label{fig:Process}
	\end{figure}
	
	The processes for $\theta^{[x\overline{yz}]}_a$, $\theta^{[\overline{x}y\overline{z}]}_a$, and $\theta^{[\overline{xy}z]}_a$ are defined analogously, but along windmills 
	related to $[xyz]$ by a $180^\circ$-rotation about the $x$, $y$, and $z$ axes, respectively. For instance, the $[x\overline{yz}]$ process involves the membrane operators located as in Fig.~\ref{fig:Windmill}(b). 
	The notation $[\mu_{1}\mu_{2}\mu_{3}]$ of three directions $\mu_{i}$ refers to a windmill made of three blades $K_{i}=\mathrm{cone}(-\boldsymbol{\mu}_{i},\boldsymbol{\mu}_{i+1})\equiv\{-\alpha\boldsymbol{\mu}_{i}+\beta\boldsymbol{\mu}_{i+1}|\alpha,\beta\geq0\}$ 
	for $i=1,2,3$, where $\mu_{4}\equiv\mu_{1}$. Each overlined direction indicates its opposite (e.g., $\overline{\mathbf{x}}=-\mathbf{x}$). 
	
	Although more windmill processes can be considered, they yield \emph{no} new self-statistical phases beyond the four already defined. 
	Any two inversion-related windmills (e.g., $\left[xyz\right]$ and $\left[\overline{xyz}\right]$ in Fig.~\ref{fig:Windmill}(a) and (c)) specify the same self-statistics. The reason is demonstrated in Fig.~\ref{fig:Windmill}(d): membrane operators for $\left[xyz\right]$ and $\left[\overline{xyz}\right]$ can be related by a deformation~\cite{RA2}. Consequently, despite eight possible windmill choices (see SM~\cite{SM}), only four self-statistics need to be specified for foliated fractons.  
	
	One might expect that $\theta^{[xyz]}_a$, $\theta^{[x\overline{yz}]}_a$, $\theta^{[\overline{x}y\overline{z}]}_a$, and $\theta^{[\overline{xy}z]}_a$ are independent. To the contrary, they are subject to a constraint
	\begin{equation}
	\theta_a^{[xyz]}\theta_a^{[x\overline{yz}]}\theta_a^{[\overline{x}y\overline{z}]}\theta_a^{[\overline{xy}z]}=1,
	\label{eq:FourfoldConstraint}
	\end{equation}
	leaving only three of them independent in general.

	\begin{figure}
		\centering
		\includegraphics[width=1\columnwidth]{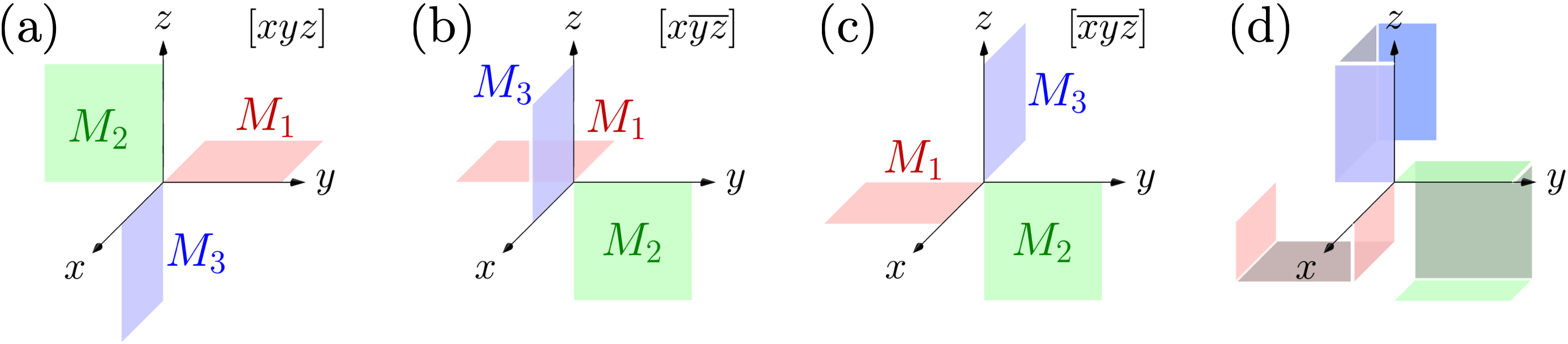}
		\caption{Membrane operators comprising the (a) $[xyz]$, (b) $[x\overline{yz}]$, and (c) $[\overline{xyz}]$ windmill processes. (d) The membrane operators for the $[\overline{xyz}]$ process are smoothly deformed such that, near the origin, they coincide with those of the $[xyz]$ process. This proves  $\theta_{a}^{[\overline{xyz}]} \equiv \theta_{a}^{[xyz]}$.
		}
		\label{fig:Windmill}
	\end{figure}

	This constraint is most naturally derived by utilizing a quantity $S_{ab}^\mu$ for $\mu=x,y,z$, defined as the mutual braiding statistics between dipoles $a \times \overline{\prescript{l\mathbf{t}_{\mu}}{}{a}}$ and $b\times \overline{\prescript{-l\mathbf{t}_{\mu}}{}{b}}$ in the large $l$ limit. The dipoles are planons (i.e., quasiparticles mobile in two directions). The braiding direction is fixed by $\mu$ via the right hand rule. See Fig.~\ref{fig:DoubleExchange}(a).
	
	\begin{figure}
		\centering
		\includegraphics[width=1\columnwidth]{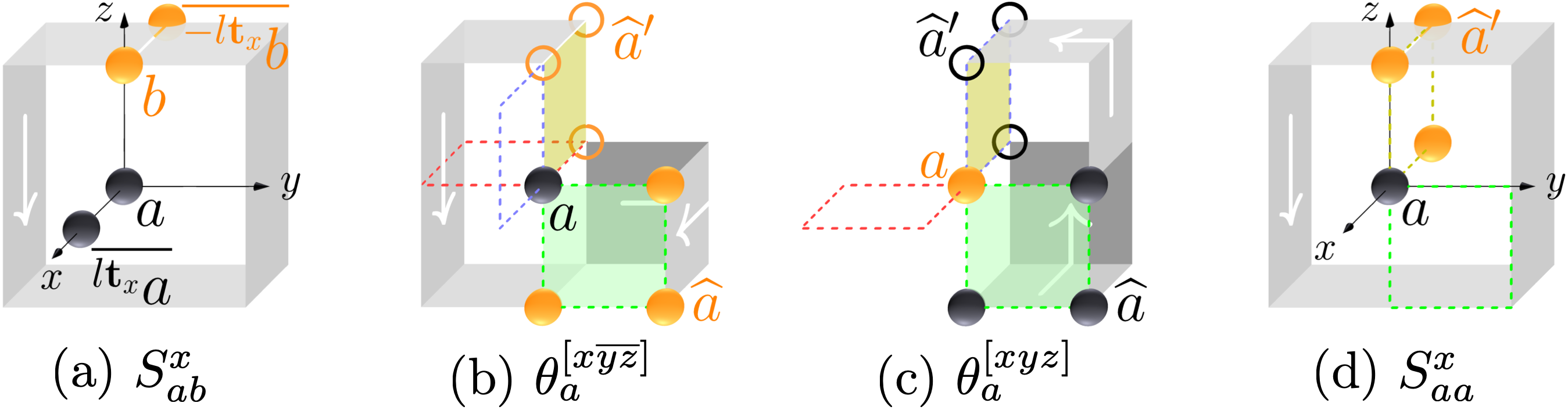}
		\caption{Graphic proof of $\theta_{a}^{[x\overline{yz}]}\theta_{a}^{[xyz]}=S_{aa}^{x}$. The white arrows denote the direction of braiding and exchange processes. (a) Definition of $S_{ab}^{x}$. (b) The $[x\overline{yz}]$ process (dotted windmill) is deformable into one realized in three steps $a \rightarrow \widehat{a}'$,  $\widehat{a}\rightarrow a$, and $\widehat{a}' \rightarrow \widehat{a}$ using operators supported on
			the olive, green, and gray areas. The intermediate state $\widehat{a}'$ consists of excitations at the three circles. (c) A process which is equivalent to the $[\overline{xyz}]$ process and hence produces statistics $\theta_{a}^{[\overline{xyz}]} \equiv \theta_{a}^{[xyz]}$. (d) A process that braids part of $\widehat{a}'$, along on the gray ribbon, around $a$. The statistical phase due to the presence of $a$ is $S_{aa}^x$.}
		\label{fig:DoubleExchange}
	\end{figure}

	A proof of Eq.~\eqref{eq:FourfoldConstraint} is as follows.
	If $a$ is exchanged twice with $\widehat{a}$, both sets of excitations return to their original position. The total process is smoothly deformable into one where $a$ is stationary while $\widehat{a}$ braids around $a$.
	For instance, we can deform the $[x\overline{yz}]$ windmill process into one along the cyclic ``path'' in Fig.~\ref{fig:DoubleExchange}(b). 
	Similarly, the $[\overline{xyz}]$ process (which produces statistics $\theta_{a}^{[\overline{xyz}]} \equiv \theta_{a}^{[xyz]}$) is deformable into the one depicted in Fig.~\ref{fig:DoubleExchange}(c).
	If the two deformed exchanges are started and ended with the intermediate state containing excitations $a$ and $\widehat{a}'$, their composite gives the process in Fig.~\ref{fig:DoubleExchange}(d), implying
	\begin{align}
	\theta^{[xyz]}_a\theta^{[x\overline{yz}]}_a&=S^x_{aa}.\label{eq:DoubleExchange2}
	\end{align}
	Armed with this relation, we can now prove Eq.~(\ref{eq:FourfoldConstraint}) by a $180\textdegree$-rotation of Eq.~(\ref{eq:DoubleExchange2}) about the $y$ axis to obtain $\theta^{[\overline{x}y\overline{z}]}_a\theta^{[\overline{xy}z]}_a=(S^x_{aa})^*$, and multiplying it with Eq.~(\ref{eq:DoubleExchange2}).

	The mutual statistics also appear in the following formula for the self-statistics of a fusion product of two fractons, and analogous formulas due to cubic symmetry:
	\begin{equation}
	\theta_{a\times b}^{[xyz]}=\theta_{a}^{[xyz]}\theta_{b}^{[xyz]}S_{ab}^xS_{ab}^yS_{ab}^z.
	\label{eq:FusionExchange}
	\end{equation}
	See
	SM~\cite{SM} for a proof. 
	This relation implies
	\begin{equation}  S_{ab}^xS_{ab}^yS_{ab}^z=S_{ba}^{x}S_{ba}^{y}S_{ba}^{z}.
	\end{equation}
	It is interesting to note that Eqs.~(\ref{eq:DoubleExchange2}) and (\ref{eq:FusionExchange}) generalize the constraints $\theta_a^2=S_{aa}$ and $\theta_{a\times b}=\theta_a\theta_bS_{ab}$ of 2D Abelian topological orders, where $\theta_a$ is the topological spin and $S$ the topological $S$-matrix. 
	For an Abelian planon $a$ satisfying the foliation condition Eq.~\eqref{eq:3foliatedfusion}, analogous windmill processes are reducible into 2D braidings and the above discussions reduce to these familiar 2D equations.  
	
	Now assume a foliated fracton satisfies $a^N=1$. We show its self-statistics being constrained to \textit{discrete} values for use in distinguishing fracton orders. Note $S_{aa}^xS_{aa}^yS_{aa}^z=(\theta_a^{[xyz]})^2$ by virtue of Eqs.~\eqref{eq:FourfoldConstraint} and \eqref{eq:DoubleExchange2}. 
	Thus, since $(S^\mu_{aa})^N = S^\mu_{a^Na} = 1$, we have $(\theta_a^{[xyz]})^{2N}=1$.
	Moreover, applying Eq.~(\ref{eq:FusionExchange}) recursively gives
	$(\theta_a^{[xyz]})^{N^2} = \theta_{a^N}^{[xyz]}=1$. Together, these imply the self-statistics of $a$ being multiples of $e^{2 \pi i/(N \text{gcd}(N,2))}$ in analogy to anyons in 2D.
	
	\paragraph{Semionic fractons in twisted checkerboard models.}A major application of fracton self-statistics is to distinguish the quantum phase of the checkerboard model $H_{\mathrm{cb}}$ from its twisted variants introduced in Ref.~\citep{TwistedFracton}. 
	To illustrate, we consider seven twisted models, denoted $H_{\mathrm{cb}}^{x}$,
	$H_{\mathrm{cb}}^{y}$, $H_{\mathrm{cb}}^{z}$, $H_{\mathrm{cb}}^{xy}$,
	$H_{\mathrm{cb}}^{yz}$, $H_{\mathrm{cb}}^{zx}$, and $H_{\mathrm{cb}}^{xyz}$ below.
	Together with $H_{\mathrm{cb}}$, we will show that the eight models fall into two quantum phases, distinguishable by the presence or absence of semionic fracton self-statistics. 
	Explicit construction of paths connecting models with identical fracton self-statistics is given in SM~\cite{SM}.

	First, in $H_{\mathrm{cb}}$ (Eq.~\eqref{eq:Hcb}), all excitations (including fractons)
	exhibit either bosonic ($+1$) or fermionic ($-1$) statistics. This is because all statistical processes are realizable by tensor products of Pauli operators which only commute or anticommute with each other.

	In contrast, $H_{\mathrm{cb}}^{x}$ represents a new phase allowing \emph{semionic} ($\pm i$) fracton self-statistics. Instead of using the formalism in Ref.~\citep{TwistedFracton},
	we specify this model using a non-Pauli stabilizer Hamiltonian
	\begin{align}
	H_{\mathrm{cb}}^{x} & =-\sum_{c}\left(A_{c}^{x}+B_{c}\right)\label{eq:Hx}
	\end{align}
	obtained by replacing $A_c$ 
	in the untwisted model Eq.~\eqref{eq:Hcb} with a modified term $A_{c}^{x}$,
	to have a convenient description of excitations with $\left(A_{c}^{x}\right)^{2}=1$ and the full spectrum labeled by simultaneous eigenvalues $\left\{ A_{c}^{x},B_{c}=\pm1\right\} $, where $x$ refers to twisting being associated with $x$-edges.
	Explicitly, 
	we label vertices and cubes by monomials as
	in Fig.~\ref{fig:checkerboard}(a) and denote finite sets of vertices by polynomials with $\mathbb{Z}_2=\{0,1\}$ coefficients~\cite{Haah2013}. In this notation, 
	\begin{equation}
	A_{c}^{x}\coloneqq A_{c}\phi_{\left(1+x\right)\overline{x}c}\phi_{\left(1+x\right)xc} \label{eq:A_x}
	\end{equation}
	according to the construction described in SM~\citep{SM},
	where $\ell=\left(1+x\right)\overline{x}c$ and $\ell=\left(1+x\right)xc$ denote vertex pairs that are ends of $x$-edges,
	and
	\begin{align}
	\phi_{\ell} & \coloneqq\left(-1\right)^{n_{\ell\overline{y}}^{-}n_{\ell}^{-}+n_{\ell\overline{y}z}^{-}n_{\ell z}^{-}+n_{\ell}^{-}n_{\ell y}^{+}+n_{\ell z}^{-}n_{\ell yz}^{+}+n_{\ell y}^{+}n_{\ell y^{2}}^{-}+n_{\ell yz}^{+}n_{\ell y^{2}z}^{-}}\nonumber \\
	& \phantom{=}\cdot\left(-1\right)^{n_{\ell yz\left(1+y\right)}^{-}n_{\ell y\left(1+y\right)\left(1+z\right)}^{-}}\cdot i^{-n_{\ell\left(1+\overline{y}\right)\left(1+z\right)}^{-}} \label{eq:a}
	\end{align}
	is a Dijkgraaf-Witten twisting factor, with the shorthand
	\begin{equation}
	Z_{\kappa}\coloneqq\prod_{v\in\kappa}Z_{v},\qquad n_{\kappa}^{\pm}\coloneqq\frac{1}{2}\left(1\pm Z_{\kappa}\right) \text{,}
	\end{equation}
	for $\kappa$ any finite set of vertices.

	In $H_{\mathrm{cb}}^{x}$, one example of semionic fracton is a $B_{c}=-1$ excitation (denoted
	$m_{x}$ below), which has
	\begin{equation}
	\theta_{m_{x}}^{[xyz]}=\theta_{m_{x}}^{[x\overline{yz}]}=i\quad\mathrm{and}\quad\theta_{m_{x}}^{[\overline{xy}z]}=\theta_{m_{x}}^{[\overline{x}y\overline{z}]}=-i.
	\end{equation}
	Two derivations of the statistics are given in SM~\citep{SM}. 
	In one, we construct a modified $X$ operator that explicitly generates the statistical processes for $B$ excitations. The modification of $X$ is required to ensure no $A_{c}^{x}$ terms flipped,
	and results in the above semionic self-statistics.

	We emphasize that in $H_{\mathrm{cb}}^{x}$, exotic self-statistics ($\theta\neq\pm1$) are exclusive to fractons. 
	Ref.~\citep{TwistedFracton} reported that non-fractonic excitations in $H_{\mathrm{cb}}^{x}$ exhibit only bosonic or fermionic statistics. This implies that $H_{\mathrm{cb}}^{x}$ cannot be a tensor product of $H_\mathrm{cb}$ and 2D anyon models containing semions. Therefore, the fact that only fracton self-statistics can distinguish the two models highlights the novelty of $H_{\mathrm{cb}}^{x}$ as a distinct phase of matter. We refer to the phase of $H_{\mathrm{cb}}^{x}$ as a \emph{semionic fracton order}, as characterized by the presence of semionic statistics for only the fracton excitations.

	The remaining six models are constructed similarly to $H_{\mathrm{cb}}^{x}$.
	In $H_{\mathrm{cb}}^{x}$, the twisting factor $\phi_{\left(1+x\right)\overline{x}c}\phi_{\left(1+x\right)xc}$ in Eq.~\eqref{eq:A_x} is linked to $x$-edges.  
	Its analogue associated with $y$-edges ($z$-edges) specifies  $H_{\mathrm{cb}}^{y}$ ($H_{\mathrm{cb}}^{z}$). 
	Moreover, twisting can be applied to more than one direction simultaneously; for example, 
	$H_{\mathrm{cb}}^{xy}$ has twisting made along both $x$-edges and $y$-edges.

	Remarkably, despite the six models having different ground states, we discover that: 
	(1) 
	$H_{\mathrm{cb}}^{y}$,  $H_{\mathrm{cb}}^{z}$, and $H_{\mathrm{cb}}^{xyz}$ represent the same semionic fracton phase as $H_{\mathrm{cb}}^{x}$, 
	while (2) 
	$H_{\mathrm{cb}}^{xy}$, $H_{\mathrm{cb}}^{yz}$, and $H_{\mathrm{cb}}^{zx}$
	fall within the phase of  
	$H_{\mathrm{cb}}$.  
	Let us first demonstrate how fracton self-statistics are matched between $H_{\mathrm{cb}}^{xy}$ and $H_{\mathrm{cb}}$. 
	In $H_{\mathrm{cb}}^{xy}$,  excitation $B_c =-1$ (denoted $m_{xy}$) is a fracton with 
	\begin{equation}
	\begin{gathered}\theta_{m_{xy}}^{[xyz]}=i\cdot i=-1,\;\;\quad\theta_{m_{xy}}^{[\overline{xy}z]}=\left(-i\right)^{2}=-1,\\
	\theta_{m_{xy}}^{[x\overline{yz}]}=i\cdot\left(-i\right)=1,\quad\theta_{m_{xy}}^{[\overline{x}y\overline{z}]}=\left(-i\right)\cdot i=1,
	\end{gathered}
	\end{equation}
	where two twistings cause a cancellation in semionic character. 
	Further, combining $m_{xy}$ with an $A$ excitation at relative position $\overline{x}y$, denoted $\prescript{\overline{x}y}{}{e}$, yields a fracton $\prescript{\overline{x}y}{}{e}\times m_{xy}$ with purely bosonic self-statistics, which can be seen via Eq.~\eqref{eq:FusionExchange} and its analogues.

	Based on this observation, we indeed find an exact local unitary transformation relating the ground states of $H_\mathrm{cb}^{xy}$ and $H_\mathrm{cb}$, rigorously confirming they represent the same phase (see SM~\cite{SM}). 
	Other phase identifications in the classification can be proven analogously.

	\paragraph{Self-statistics of fractal fractons.}
	The notion of self-statistics extends to non-foliated fractons \cite{GF}.
	We demonstrate this with Haah's code~\cite{Haah2011}
	\begin{equation}
	H_{\mathrm{Haah}}=-\sum_{\lambda\in\Lambda}\left(A_{\lambda}+B_{\lambda}\right), \label{eq:haah}
	\end{equation}
	an exactly solvable model defined on a cubic lattice with two qubits per vertex. Here, 
	$\Lambda=\{x^{i}y^{j}z^{k}\}$ represents lattice vectors $\left(i,j,k\right)\in\mathbb{Z}^{3}$ in monomial form.  The $A\left(B\right)$ terms are translations of the representative $A_{1}\left(B_{1}\right)$ at the origin given in Fig.~\ref{fig:haah-1}(a). 
	Each $A_{\lambda}\left(B_{\lambda}\right)$ is a product of eight Pauli $X$'s ($Z$'s). 
	With collections of translationally related objects represented as sums of $\Lambda$'s elements,
	we can describe $A_{\lambda}$ and $B_{\lambda}$ using Laurent polynomials with $\mathbb{Z}_2=\{0,1\}$ coefficients~\cite{Haah2013}:
	\begin{align}
	A_\lambda &=\lambda \cdot (
	\overline{f}_1, 
	\overline{f}_2, 
	0,
	0),
	&B_\lambda &= \lambda \cdot (
	0,
	0,
	f_2,
	f_1),\\
	f_{1} & =1+x+y+z,& f_{2}&=1+xy+yz+zx,\label{eq:f}\\
	\overline{f}_{1} & =1+\overline{x}+\overline{y}+\overline{z},&\overline{f}_{2}&=1+\overline{xy}+\overline{yz}+\overline{zx}.\label{eq:f_inversion}
	\end{align}
	where 
	the first (last) two components of $A_\lambda$ and $B_\lambda$
	locate Pauli $X$'s ($Z$'s) for the two qubit species. The bar denotes spatial inversion: $x\rightarrow \overline{x} \equiv x^{-1}$ etc.

	Excitations can also be described by polynomials. Applying a Pauli $Z$ to the first (or second) qubit at the origin excites $A$-terms in the pattern $f_1$ (respectively, $f_2$). Interestingly, one may  flip $A$-terms purely in the $yz$ plane~\citep{Yoshida_Fractal} by noting 
	\begin{align}
	(y+z)f_1 + f_2 =   1+y+y^2 + z+yz+z^2 \eqqcolon g.
	\end{align}
	Consider planar fractional moves for visual clarity. The $yz$-planar ones are generated by $g$, allowing $A$ excitations
	to travel arbitrarily long distances toward each of the \emph{conic} directions $K_1, K_2$, and $K_3$ in Fig.~\ref{fig:haah-1}(a).  
	Explicitly, for $l=2^n$, one has $1+g^{l}=y^l +y^{2l} +z^l +z^l y^l +z^{2l}$ due to the $\mathbb{Z}_2$ setting of the model. Accordingly, 
	$1\rightarrow 1+g^{l}$ provides an instance of pushing an $A$-term excitation by at least $l$ distance toward $K_1$. 
	It is realizable by fractal-shaped operator 
	$g^{l-1}(0,0,y+z,1)$, reflecting the excitation being a fracton of fractal nature. See Fig.~\ref{fig:haah-1}(b). We call each $K_i$ a \textit{mobility cone} for $A$ excitations, as defined in SM~\cite{SM}. 
	The description of $B$ excitations are analogous but with spatial directions inverted. 
	
	Based on mobility cones, we categorize fractons of $H_{\mathrm{Haah}}$ into three types---$A$, $B$, and mixed---and define their windmill self-statistical processes. 
	Type-$A$ (type-$B$) are fractons with the mobility cones $K_i$ (respectively, $-K_i$) for $i=1,2,3$ shown in Fig.~\ref{fig:haah-1}(a). 
	The mixed are bound states of type-$A$ and type-$B$; 
	they cannot be moved along any individual cone among $K_i$'s or $-K_i$'s. 
	Self-statistics is definable using the ``windmill'' made of 
	mobility cones. See Figs.~\ref{fig:haah-1}(a) and (c). 
	In $H_\mathrm{Haah}$, non-mixed (i.e., type-$A$ or type-$B$)
	fractons exhibit purely bosonic self-statistics, since 
	only one type of Pauli is involved.
	
	\begin{figure}[t!]
		\includegraphics[width=1\columnwidth]{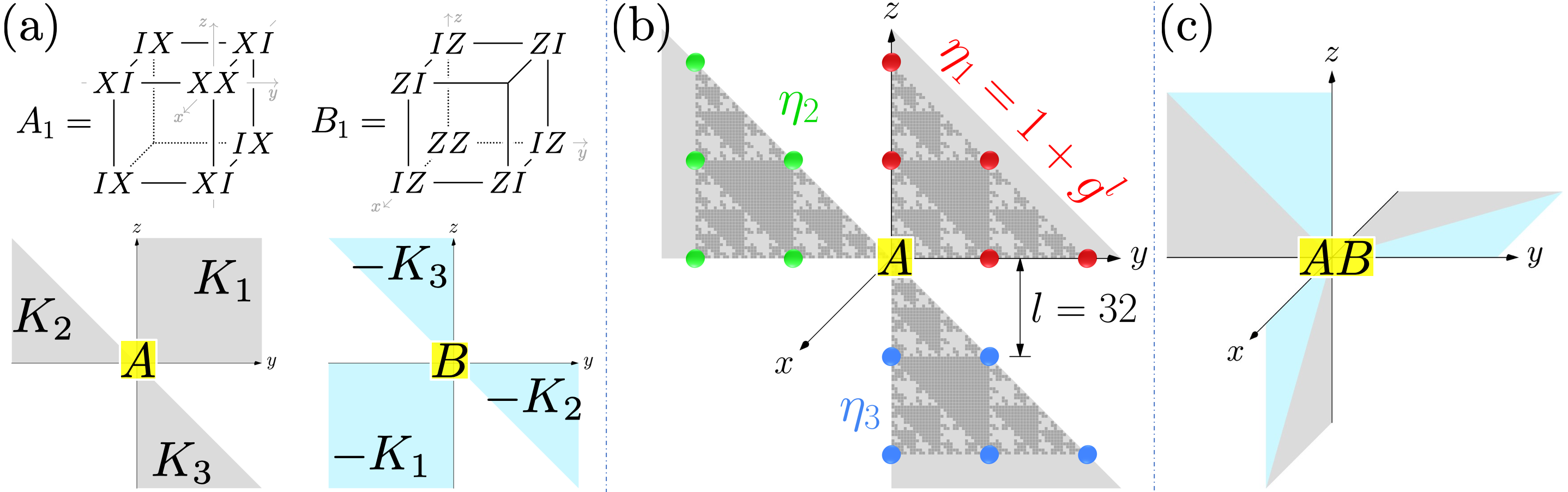}
		
		\caption{Fracton's mobility in the Haah's code. 
			(a) 
			Top: definition of $A_1$ and $B_1$. They are products of eight Pauli's. Identity operators $I$ are omitted when possible. 
			Bottom: mobility cones (on the $yz$-plane) for $A$ and $B$ excitations.
			(b)
			Fractional moves $1\rightarrow \eta_{i}$ of an $A$ excitation are realized by operators of fractal support. Gray square dots represent  operator $(0,0,y+z,1)$ and its translations.
			(c) 
			A windmill for a composite of type-$A$ and type-$B$ fractons. }
		
		\label{fig:haah-1}
	\end{figure}
	
	\paragraph{Fermionic type-$A$ fractons in a twisted Haah's code.}To further illustrate the usage of fracton self-statistics, consider a gauge-theoretic variant of Haah's code defined by applying $H_\mathrm{Haah}$ to a Hilbert space that binds a \emph{fermionic} mode $\psi_\lambda$ to $A_\lambda$ via Gauss's law $-i\gamma_{\lambda}\tilde{\gamma}_{\lambda}A_{\lambda}=1$,
	where $\gamma_{\lambda}\coloneqq\psi_{\lambda}+\psi_{\lambda}^{\dagger}$
	and
	$\tilde{\gamma}_{\lambda}\coloneqq\frac{1}{i}(\psi_{\lambda}-\psi_{\lambda}^{\dagger})$ are Majorana operators. 
	As detailed in SM~\cite{SM},
	the gauge theory emerges from a spin model $H_{\mathrm{Haah}}^F$, namely, the twisted Haah's code proposed in Ref.~\citep{PhysRevResearch.2.023353}.
	
	Fracton self-statistics enables us to settle the unresolved question of whether $H_{\mathrm{Haah}}^F$ represents a distinct fracton order from the original Haah's code $H_{\mathrm{Haah}}$. 
	The expectation that $A$ excitation becomes fermionic 
	is now definable and provable via windmill processes. 
	The operator creating $A$-excitations is modified to $Z_{\sigma} c_{\sigma}$ due to gauge invariance,
	where $Z_\sigma$ denotes Pauli $Z$ on qubit $\sigma$ while 
	$c_{\sigma}$ denotes a product of $\gamma_\lambda$'s that are associated with the $Z_\sigma$-flipped $A$ terms. 
	Still, one may wonder  
	whether it is possible to compensate the statistics change by attaching $B$ excitations to $A$. Indeed, this is the case for the 2D toric code, and the checkerboard model, which we have shown above. However, it is not allowed here because attaching type-$B$ fractons alters the mobility of $A$. Thus, the presence of fermionic type-$A$ fractons distinguishes $H_{\mathrm{Haah}}^{F}$ from  $H_{\mathrm{Haah}}$.
	See also SM~\cite{SM} for the discreteness of this self-statistics, which confirms the phase distinction.
	
	\paragraph{Conclusions.}We have shown that it is possible to exchange two realizations of a fracton superselection sector via its fractional mobility. The notion of self-statistics for fractons can thus be introduced, which is essential in characterizing fracton orders. As applications, we studied a family of twisted checkerboard models and a twisted Haah's
	code, from which we revealed a novel phase of foliated nature\textemdash what we call a semionic fracton order\textemdash and a new fractal-type
	order characterized by emergent fermionic fractons.
	Our work marks a crucial step towards a full ``algebraic theory of fractons'' yet to be developed. 
	
	\begin{acknowledgments}
		We thank Sheng-Jie Huang, Juven Wang and especially Ashvin Vishwanath for helpful discussions. The authors are grateful to the Banff International Research Station, where this work began in 2020 at the workshop ``Fractons and Beyond.'' HS also acknowledges discussions with Sung-Sik Lee. HS has been supported by the Natural Sciences and Engineering Research Council of Canada and the National Natural Science Foundation of China (Grant No.~12047503). NT is supported by the Walter Burke Institute for Theoretical Physics at Caltech. The work of MH is supported by the U.S. Department of Energy, Office of Science, Basic Energy Sciences (BES) under Award number DE-SC0014415. WS is supported by the Simons Collaboration on Ultra-Quantum Matter (UQM), which is a grant from the Simons Foundation (651444). The work of NT, WS and MH also benefited from meetings of the UQM Simons Collaboration supported by Simons Foundation grant number 618615.
	\end{acknowledgments}

	\bibliography{fss}

\onecolumngrid 
\clearpage
\makeatletter 
\begin{center}   
	\textbf{\large Supplementary Material for ``Fracton Self-Statistics''}\\
	[1em]
	Hao Song, Nathanan Tantivasadakarn, Wilbur Shirley, and Michael Hermele  
	\thispagestyle{titlepage} 
\end{center} 	
\setcounter{equation}{0} 
\setcounter{figure}{0} 
\setcounter{table}{0} 
\setcounter{page}{1} 
\setcounter{section}{0} 
\renewcommand{\theequation}{S\arabic{equation}} 
\renewcommand{\thesection}{S.\Roman{section}}
\renewcommand{\thetable}{S\arabic{table}} 
\renewcommand\thefigure{S\arabic{figure}} 
\renewcommand{\theHtable}{Supplement.\thetable} 
\renewcommand{\theHfigure}{Supplement.\thefigure}

This supplementary material contains: (\ref{sec:counting}) the counting of windmills for foliated fractons, (\ref{sec:P})
a proof of Eq.~\eqref{eq:FusionExchange}, (\ref{sec:TGT}) a non-Pauli stabilizer formulation
of 2D double semion model (as a preliminary for twisted checkerboard
models), (\ref{sec:TwistedChecker}) details of twisted checkerboard
models and their classification, (\ref{sec:mobility-cones}) a mathematical
treatment of mobility cones,  (\ref{sec:tHaah}) details of the
twisted Haah's code, and (\ref{sec:d_Haah}) the discreteness of fracton self-statistics in the Haah's code and its twisted variant.

\section{counting of windmills for foliated fractons}\label{sec:counting}

In this section, we enumerate the windmills for foliated fractons. There are eight of them: the $yz$ plane has four quadrants for placing $M_2$, and one has two choices of $M_1$ and $M_3$ (along the $xy$ and $zx$ planes respectively) after fixing $M_2$ as depicted in Fig.~3(b) and (c) of the main text. 
Explicitly, they can be denoted as
$\left[xyz\right]$, $\left[x\overline{yz}\right]$, $\left[\overline{x}y\overline{z}\right]$, $\left[\overline{xy}z\right]$, $\left[\overline{xyz}\right]$, $\left[\overline{x}yz\right]$, $\left[x\overline{y}z\right]$, and $\left[xy\overline{z}\right]$. The last four are the spacial inversions of the first four. 

It is important to note that, while windmill processes generally exist (as also observed in our study of Haah's code), the numbers of windmills and independent statistics vary with the underline fracton mobility structure. Our enumeration targets foliated fractons as defined by the main text's Eq.~\eqref{eq:3foliatedfusion}, deferring analogous exhaustive counts of fracton statistical processes in other scenarios (including the fractal cases and many other variants like those in Refs.~\cite{Shirley23, Yan23}) to future studies.

\section{Proof of Eq.~(\ref{eq:FusionExchange})}\label{sec:P}

This section presents a proof for the main text's Eq.~\eqref{eq:FusionExchange}. 

To start, we observe that $\theta_{q\times q'}^{[xyz]}=\theta_{q}^{[xyz]}\theta_{q'}^{[xyz]}$
if $q$ and $q'$ are excitations separated in the $[111]$ direction.
This is because $\theta_{q\times q'}^{[xyz]}$ can be realized as
a composition of the $\left[xyz\right]$ processes for $q$ and $q'$,
which are realized on two parallel windmills that do \emph{not} intersect
with each other. See Fig.~\ref{fig:TopSpin}. 

\begin{figure}[h]
	\includegraphics[width=0.30\columnwidth]{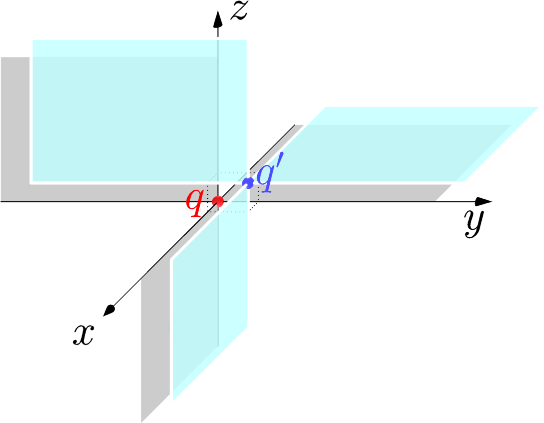}
	
	\caption{Illustration of $\theta_{q\times q'}^{[xyz]}=\theta_{q}^{[xyz]}\theta_{q'}^{[xyz]}$
		for $q$ and $q'$ spatially separated in the $\left[111\right]$
		direction.}
	
	\label{fig:TopSpin} 
\end{figure}

Next, consider the $\left[xyz\right]$ self-statistics of $a\times b\times\overline{\prescript{l\mathbf{m}}{}{a}}\times\overline{\prescript{-l\mathbf{m}}{}{b}}$
for any large enough $l$, where $\mathbf{m}=\mathbf{t}_{x}+\mathbf{t}_{y}+\mathbf{t}_{z}$
denotes the $[111]$ direction. The above observation implies
\begin{equation}
\theta_{a\times b\times\overline{\prescript{l\mathbf{m}}{}{a}}\times\overline{\prescript{-l\mathbf{m}}{}{b}}}^{[xyz]}=\theta_{a\times b}^{[xyz]}\theta_{\overline{\prescript{l\mathbf{m}}{}{a}}}^{[xyz]}\theta_{\overline{\prescript{-l\mathbf{m}}{}{b}}}^{[xyz]}.\label{eq:Sabab}
\end{equation}
Similarly, for later use, we also have 
\begin{equation}
\theta_{a\times\overline{\prescript{l\mathbf{m}}{}{a}}}^{[xyz]}=\theta_{a}^{[xyz]}\theta_{\overline{\prescript{l\mathbf{m}}{}{a}}}^{[xyz]}\qquad\mathrm{and}\qquad\theta_{b\times\overline{\prescript{-l\mathbf{m}}{}{b}}}^{[xyz]}=\theta_{b}^{[xyz]}\theta_{\overline{\prescript{-l\mathbf{m}}{}{b}}}^{[xyz]}.\label{eq:Saabb}
\end{equation}

An alternate expression of the self-statistics of $a\times b\times\overline{\prescript{l\mathbf{m}}{}{a}}\times\overline{\prescript{-l\mathbf{m}}{}{b}}$
can be given in terms of statistics of $a\times\overline{\prescript{l\mathbf{m}}{}{a}}$
and $b\times\overline{\prescript{-l\mathbf{m}}{}{b}}$. To write it
down explicitly, recall that, for each $\mu\in\left\{ x,y,z\right\} $,
\begin{equation}
p_{\mu}^{a}\coloneqq a\times\overline{\prescript{l\mathbf{t}_{\mu}}{}{a}}\qquad\mathrm{and}\qquad p_{\overline{\mu}}^{b}\coloneqq b\times\overline{\prescript{-l\mathbf{t}_{\mu}}{}{b}}\label{eq:dipoles}
\end{equation}
are planons mobile in the directions perpendicular to $\mu$. In particular,
we have $p_{y}^{a}=a\times\overline{\prescript{l\mathbf{t}_{y}}{}{a}}=\prescript{l\mathbf{t}_{x}}{}{a}\times\overline{\prescript{l\mathbf{t}_{x}+l\mathbf{t}_{y}}{}{a}}$
and $p_{z}^{a}=a\times\overline{\prescript{l\mathbf{t}_{z}}{}{a}}=\prescript{l\mathbf{t}_{x}+l\mathbf{t}_{y}}{}{a}\times\overline{\prescript{l\mathbf{m}}{}{a}}$.
Accordingly, 

\begin{equation}
a\times\overline{\prescript{l\mathbf{m}}{}{a}}=a\times\overline{\prescript{l\mathbf{t}_{x}}{}{a}}\times\prescript{l\mathbf{t}_{x}}{}{a}\times\overline{\prescript{l\mathbf{t}_{x}+l\mathbf{t}_{y}}{}{a}}\times\prescript{l\mathbf{t}_{x}+l\mathbf{t}_{y}}{}{a}\times\overline{\prescript{l\mathbf{m}}{}{a}}=p_{x}^{a}\times p_{y}^{a}\times p_{z}^{a}.
\end{equation}
Similarly, $b\times\overline{\prescript{-l\mathbf{m}}{}{b}}=p_{\overline{x}}^{b}\times p_{\overline{y}}^{b}\times p_{\overline{z}}^{b}$.
The associated windmill processes are thus reducible into braidings
of planons. Therefore, $\theta_{a\times\overline{\prescript{l\mathbf{m}}{}{a}}}^{[xyz]}=\theta_{p_{x}^{a}}\theta_{p_{y}^{a}}\theta_{p_{z}^{a}}$,
$\theta_{b\times\overline{\prescript{-l\mathbf{m}}{}{b}}}^{[xyz]}=\theta_{p_{\overline{x}}^{b}}\theta_{p_{\overline{y}}^{b}}\theta_{p_{\overline{z}}^{b}}$,
and more importantly, for our current purpose, $\theta_{a\times b\times\overline{\prescript{l\mathbf{m}}{}{a}}\times\overline{\prescript{-l\mathbf{m}}{}{b}}}^{[xyz]}$
can be expressed as 
\begin{equation}
\theta_{a\times b\times\overline{\prescript{l\mathbf{m}}{}{a}}\times\overline{\prescript{-l\mathbf{m}}{}{b}}}^{[xyz]}=\theta_{a\times\overline{\prescript{l\mathbf{m}}{}{a}}}^{[xyz]}\cdot\theta_{b\times\overline{\prescript{-l\mathbf{m}}{}{b}}}^{[xyz]}\cdot S_{ab}^{x}S_{ab}^{y}S_{ab}^{z},\label{eq:Sab}
\end{equation}
where $\theta_{p_{\mu}^{a}(p_{\overline{\mu}}^{b})}$ is the braiding
self-statistics of $p_{\mu}^{a}$ ($p_{\overline{\mu}}^{b}$) and
$S_{ab}^{\mu}$ has been defined in the main text to be the mutual
statistics between $p_{\mu}^{a}$ and $p_{\overline{\mu}}^{b}$. 

Finally, the desired identity 
\begin{equation}
\theta_{a\times b}^{[xyz]}=\theta_{a}^{[xyz]}\theta_{b}^{[xyz]}S_{ab}^{x}S_{ab}^{y}S_{ab}^{z}\label{eq:6}
\end{equation}
is proven by comparing Eqs.~(\ref{eq:Sabab}), (\ref{eq:Saabb}),
and (\ref{eq:Sab}).

\section{Non-Pauli stabilizer formulation of 2D double semion model\label{sec:TGT}}

There are various ways to represent the 2D double semion phase \cite{Levin2005,Hu2014,TwistedFracton,Dauphinais2019,MagdalenadelaFuente2021,Ellison2021}.
In this section, we turn a twisted gauge model for double semions
into a non-Pauli stabilizer formalism~\cite{nonPauli}, which will
be used in Sec.~\ref{sec:TwistedChecker} to motivate the formulation
of the twist checkerboard models used in the main text.

It is known that the double semion phase can be represented by a twisted
$\mathbb{Z}_{2}$ gauge model \cite{Hu2014,TwistedFracton}. For the
later convenience in making a comparison with layers of a 3D checkerboard,
consider a 2D checkerboard with a physical qubit per vertex and triangulated
as in Fig.~\ref{fig:DS}(a), where the cyan regions (which we call \emph{sites})
are left as holes for holding non-trivial fluxes. The formalism of 2D
twisted gauge models \cite{TwistedFracton} defines, for example,
the flux term $B_{s}$ and the gauge transformation $\tilde{A}_{s}$
at site $s=4$ as
\begin{align}
B_{4} & \coloneqq Z_{13}Z_{14}Z_{36}Z_{46},\\
\tilde{A}_{4} & \coloneqq X_{14}X_{24}X_{46}X_{47}C\left(Z_{13},Z_{14}\right)C\left(Z_{14},-Z_{24}\right)C\left(-Z_{24},Z_{25}\right)\nonumber \\
& \phantom{\coloneqq\;}C\left(Z_{36},Z_{46}\right)C\left(Z_{46},-Z_{47}\right)C\left(-Z_{47},Z_{57}\right)C\left(Z_{24}Z_{25},B_{5}\right),\label{eq:Atwisted}
\end{align}
where each edge (and the qubit on it) is labeled by its end sites,
$B_{5}\coloneqq Z_{24}Z_{25}Z_{47}Z_{57}$ analogous to $B_{4}$,
and 
\begin{equation}
C\left(P,Q\right)\coloneqq\left(-1\right)^{\frac{1}{4}\left(1-P\right)\left(1-Q\right)} \label{eq:C}
\end{equation}
is a shorthand for expressing controlled-$Z$ gates and their generalizations. 

The formalism of twisted gauge models ensures $[\tilde{A}_{s},\tilde{A}_{s'}]=[\tilde{A}_{s},B_{s'}]=[B_{s},B_{s'}]=0$.
In addition, $\tilde{A}_{s}$ corresponds to a projective representation
of $\mathbb{Z}_{2}$ when $B_{s}=-1$. Explicitly, it is straightforward
to check
\begin{equation}
\tilde{A}_{s}^{2}=B_{s}
\end{equation}
from the definition of $\tilde{A}_{s}$ above. Consequently, $(\tilde{A}_{s},B_{s})=\left(1,1\right)$,
$\left(-1,1\right)$, $\left(i,-1\right)$, and $\left(-i,-1\right)$
label the four possible states (with $\left(1,1\right)$ being the
trivial one) at each site $s$. 

\begin{figure}
	\begin{minipage}[c][1\totalheight][t]{0.49\columnwidth}%
		\includegraphics[width=1\columnwidth]{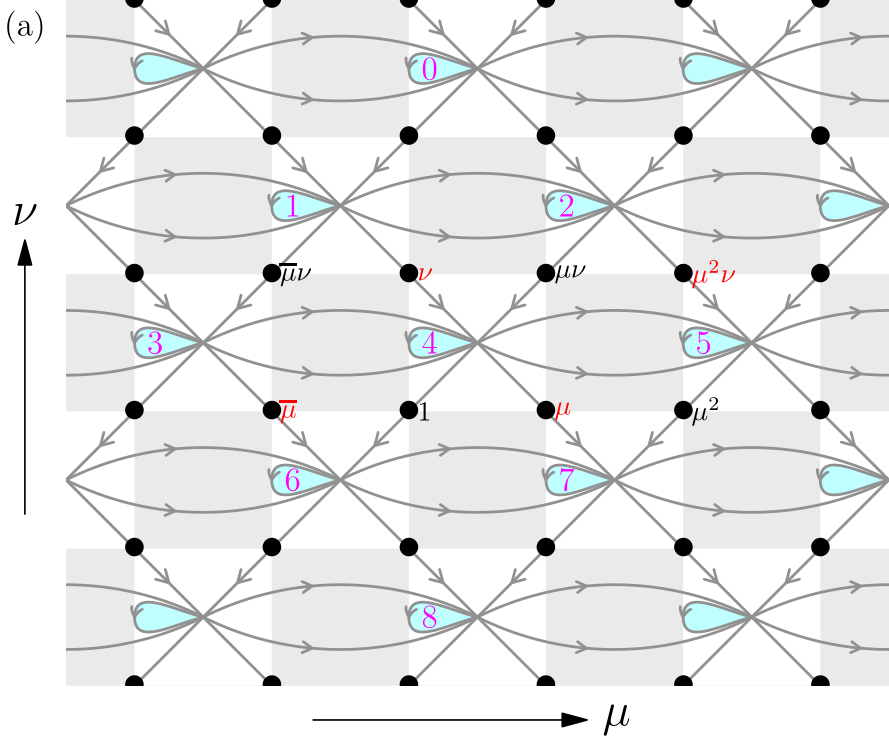}%
	\end{minipage}\qquad{}%
	\begin{minipage}[c][1\totalheight][t]{0.25\columnwidth}%
		\includegraphics[width=1\columnwidth]{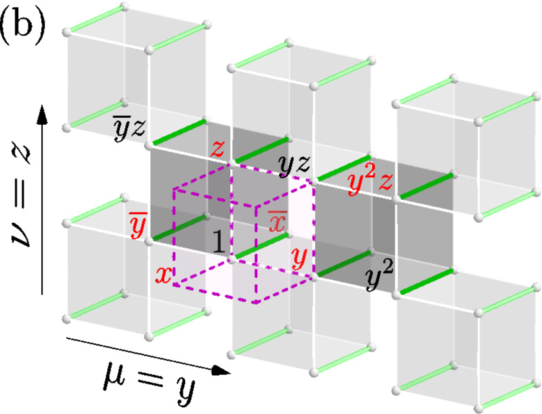}
		
		\medskip{}
		
		\includegraphics[width=1\columnwidth]{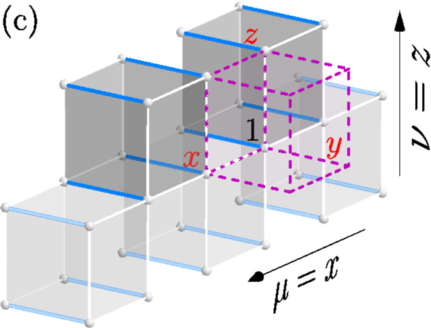}%
	\end{minipage}
	
	\caption{(\textbf{a}) A double semion model on a triangulated 2D checkerboard
		$\mathcal{D}$ with a qubit per vertex (i.e., black dot). Cyan regions
		(which we call \emph{sites}) are punched holes of the plane for holding
		fluxes. Operator $\mathcal{A}_{s}$ for site $s=4$ is supported on
		the eight qubits $\{\overline{\mu},1,\mu,\mu^{2},\overline{\mu}\nu,\nu,\mu\nu,\mu^{2}\nu\}$,
		where $\overline{\mu}\equiv\mu^{-1}$. A label $\mu^{j}\nu^{k}$ is
		in red if $j+k$ is odd. (\textbf{b}) {[}respectively, (\textbf{c}){]}
		A correspondence between an $x$-layer ($y$-layer) of 3D checkerboard
		and the 2D checkerboard $\mathcal{D}$, in which $x$-edges ($y$-edges),
		colored green (blue), correspond to the black dots of $\mathcal{D}$.
		Twisting this layer of the 3D checkerboard model adds, to the $A$
		term at the dashed magenta cube, a phase factor depending on the variables
		$Z_{v\left(1+x\right)}=Z_{v}Z_{vx}$ (respectively, $Z_{v\left(1+y\right)}=Z_{v}Z_{vy}$)
		associated with the $x$-edges ($y$-edges) of the two neighbor cubes
		shown in a darker gray. Vertex $\left(i,j,k\right)$ is labeled $x^{i}y^{j}z^{k}$
		and the label is shown in red (black) if $i+j+k$ is odd (even), where
		$\overline{x}\equiv x^{-1}$ etc. }
	
	\label{fig:DS}
\end{figure}

For convenience, replacing $\tilde{A}_{s}$ with
\begin{equation}
\mathcal{A}_{s}\coloneqq\tilde{A}_{s}(\sqrt{B_{s}})^{\dagger}\label{eq:A_s}
\end{equation}
can turn this double semion model into a \emph{non-Pauli stabilizer
	code}, where $\mathcal{A}_{s}^{2}=B_{s}^{2}=1$, $[\mathcal{A}_{s},\mathcal{A}_{s'}]=[\mathcal{A}_{s},B_{s'}]=[B_{s},B_{s'}]=0$,
and $\mathcal{A}_{s}$ acts identically as $\tilde{A}_{s}$ on the
double semion ground states. The four states at $s$ can thus be labeled
by $\left(\mathcal{A}_{s},B_{s}\right)=\left(1,1\right)$, $\left(-1,1\right)$,
$\left(1,-1\right)$, and $\left(-1,-1\right)$. As shown below, their
self-statistics are $1$, $1$, $i$, and $-i$ respectively.

First, $\left(\mathcal{A}_{s},B_{s}\right)=\left(1,1\right)$ denotes
the absence of excitation and is hence trivial in all aspects (including
self-statistics). Secondly, the self-statistics of $\left(\mathcal{A}_{s},B_{s}\right)=\left(-1,1\right)$
is also trivial; excitations of this type are created in pairs and
moved by purely Pauli $Z$ operators.

As for $\left(\mathcal{A}_{s},B_{s}\right)=\left(1,-1\right)$, let
$\mathcal{S}_{ss'}$ be an operator flipping only $B_{s}$ and $B_{s'}$
at the ends of edge $\left\langle ss'\right\rangle $. Physically,
one may view $\mathcal{S}_{ss'}$ as an elementary string operator
or as a modified bit-flip operator. It takes the form 
\begin{align}
\mathcal{S}_{24} & \coloneqq X_{14}\sqrt{Z_{24}}C\left(Z_{13},Z_{14}\right)C\left(Z_{01}Z_{02},Z_{14}Z_{24}\right)C\left(-Z_{13}Z_{14},Z_{36}Z_{46}\right),\label{eq:S24}\\
\mathcal{S}_{47} & \coloneqq X_{46}(\sqrt{Z_{47}})^{\dagger}C\left(Z_{36},Z_{46}\right)C\left(Z_{13}Z_{14},Z_{36}Z_{46}\right)C\left(-Z_{46}Z_{47},Z_{68}Z_{78}\right)\label{eq:S47}
\end{align}
for $\left\langle ss'\right\rangle =\left\langle 24\right\rangle $
and $\left\langle 47\right\rangle $, and is specified by translation
symmetry in general. By direct computation, one has
\begin{align}
\mathcal{S}_{ss'}^{2}=B_{s}B_{s'} & ,\quad\mathcal{S}_{ss'}^{\dagger}=\mathcal{S}_{ss'}B_{s}B_{s'},\quad\left[\mathcal{S}_{ss'},\mathcal{S}_{s''s'''}\right]=0\;\text{if }s,s',s'',\text{ and }s'''\text{ are four distinct sites,}\label{eq:S_d}\\
\left[\mathcal{S}_{ss'},\mathcal{A}_{s''}\right] & =0,\;\forall s'',\qquad\left[\mathcal{S}_{ss'},B_{s''}\right]=0,\;\forall s''\neq s,s',\qquad\left\{ \mathcal{S}_{ss'},B_{s}\right\} =\left\{ \mathcal{S}_{ss'},B_{s'}\right\} =0,\\
\mathcal{S}_{s's}\mathcal{S}_{s''s} & =i\mathcal{S}_{s''s}\mathcal{S}_{s's},\phantom{-}\;\forall s''\text{ is on the left of }s',\quad\text{e.g.,}\quad\mathcal{S}_{24}\mathcal{S}_{14}=i\mathcal{S}_{14}\mathcal{S}_{24},\label{eq:p1}\\
\mathcal{S}_{ss'}\mathcal{S}_{ss''} & =-i\mathcal{S}_{ss''}\mathcal{S}_{ss'},\;\forall s''\text{ is on the left of }s',\quad\text{e.g.,}\quad\mathcal{S}_{47}\mathcal{S}_{46},=-i\mathcal{S}_{46}\mathcal{S}_{47},\label{eq:p2}\\
\mathcal{S}_{ss'}\mathcal{S}_{s''s} & =-\mathcal{S}_{s''s}\mathcal{S}_{ss'}B_{s},\;\,\phantom{\text{ is on the left of }s'}\qquad\text{ e.g.,}\quad\mathcal{S}_{47}\mathcal{S}_{14}=-\mathcal{S}_{14}\mathcal{S}_{47}B_{4},\label{eq:p3}
\end{align}
where all edges are assumed oriented as in Fig.~\ref{fig:DS}(a).
A generic string operator for $\left(\mathcal{A}_{s},B_{s}\right)=\left(1,-1\right)$
can be written as a product of $\mathcal{S}_{ss'}$ or $\mathcal{S}_{ss'}^{\dagger}$
along a sequence of edges.

To determine the self-statistics of $\left(\mathcal{A}_{s},B_{s}\right)=\left(1,-1\right)$,
we make an counterclockwise exchange of two such excitations, named
$a$ and $b$ for keeping track of their paths. Initialize $a$ and
$b$ at sites $1$ and $2$ respectively in Fig.~\ref{fig:DS}(a).
In terms of the basic moves $M_{1}=\mathcal{S}_{14}\left(4\rightarrow1\right)$,
$M_{2}=\mathcal{S}_{47}^{\dagger}\left(4\rightarrow7\right)$, and
$M_{3}=\mathcal{S}_{24}\left(4\rightarrow2\right)$ for this type
of excitation, the exchange can be realized as follows: first move
$a$ from $1$ to $7$ by $M_{2}M_{1}^{\dagger}=\mathcal{S}_{47}^{\dagger}\mathcal{S}_{14}^{\dagger}$,
then $b$ from $2$ to $1$ by $M_{1}M_{3}^{\dagger}=\mathcal{S}_{14}\mathcal{S}_{24}^{\dagger}$,
and finally $a$ from $7$ to $2$ by $M_{1}M_{2}^{\dagger}=\mathcal{S}_{24}\mathcal{S}_{47}$.
It results in
\begin{equation}
\mathcal{S}_{24}\mathcal{S}_{47}\mathcal{S}_{14}\mathcal{S}_{24}^{\dagger}\mathcal{S}_{47}^{\dagger}\mathcal{S}_{14}^{\dagger}=i.\label{eq:semion}
\end{equation}
Namely, $\left(\mathcal{A}_{s},B_{s}\right)=\left(1,-1\right)$ is
a \emph{semion}. In the derivation of Eq.~\eqref{eq:semion}, we 
used the identities $\mathcal{S}_{24}\mathcal{S}_{47}\mathcal{S}_{14}=-\mathcal{S}_{24}\mathcal{S}_{14}\mathcal{S}_{47}B_{4}=-i\mathcal{S}_{14}\mathcal{S}_{24}\mathcal{S}_{47}B_{4}=i\mathcal{S}_{14}\mathcal{S}_{47}\mathcal{S}_{24}B_{4}B_{4}=i\mathcal{S}_{14}\mathcal{S}_{47}\mathcal{S}_{24}$,
where $\mathcal{S}_{24}$, $\mathcal{S}_{47}$, and $\mathcal{S}_{14}$
are permuted by identities~(\ref{eq:p1}) and (\ref{eq:p3}).

The last excitation type $(\mathcal{A}_{s},B_{s})=\left(-1,-1\right)$
can be viewed as a bound state of $(\mathcal{A}_{s},B_{s})=\left(1,-1\right)$
and $\left(-1,1\right)$. It is created in pairs and moved by $\overline{\mathcal{S}}_{ss'}\coloneqq\mathcal{S}_{ss'}Z_{ss'}$.
Its self-statistics is $-i$, by a computation as in Eq.~(\ref{eq:semion})
but with $\overline{\mathcal{S}}_{ss'}$ used instead.

\section{Twisted checkerboard models and their classification\label{sec:TwistedChecker}}

In this section, we provide some more details about the twisted checkerboard
models used in the main text. 

\subsection{Construction details of twisted checkerboard models\label{subsec:ConstructionTCB}}

In this subsection, we explain how the non-Pauli stabilizer formulation
is obtained for the seven twisted checkerboard models. The construction
is done via the correspondence between the double semion model and
each layer of 3D twisted checkerboards as illustrated in Fig.~\ref{fig:DS},
which also motivates the expression of twisted $X$ operators explicitly
given in Sec.~\ref{subsec:derivationFS}.

On the double semion side, for the model on the 2D checkerboard $\mathcal{D}$
in Fig.~\ref{fig:DS}(a), let $\left(Z|\mathcal{D}\right)$ denote
the configuration of all its qubits in the $Z$ basis. Then $\mathcal{A}_{s}$
in Eq.~(\ref{eq:A_s}) takes the form $\mathcal{A}_{s}=A_{s}\phi_{s}\left(Z|\mathcal{D}\right)$,
where $A_{s}$ is a product of four Pauli $X$ operators, and $\phi_{s}\left(Z|\mathcal{D}\right)$
is a U(1) phase associated with the change of $\left(Z|\mathcal{D}\right)$
made by $A_{s}$. In fact, $\phi_{s}\left(Z|\mathcal{D}\right)$ only
depends on  the $Z$'s associated with a finite number of qubits
near $s$. Similarly, $\mathcal{S}_{ss'}$ in Eqs.~(\ref{eq:S24})
and (\ref{eq:S47}) can be viewed as a modified $X$ operator of the
form $\mathcal{X}_{\sigma}=X_{\sigma}\gamma_{\sigma}\left(Z|\mathcal{D}\right)$
with $\gamma_{\sigma}\left(Z|\mathcal{D}\right)\in\mathrm{U}\left(1\right)$,
where $\sigma\in\mathcal{D}$ labels qubits.

On the 3D checkerboard side, by a $\kappa$-layer we mean a single
layer of cubes arranged on a plane perpendicular to the $\kappa$-direction,
where $\kappa=x,y,z$. The $\kappa$-edges within can be labeled by
\begin{equation}
\mathcal{E}_{n}^{\kappa}\coloneqq\left\{ x^{j_{x}}y^{j_{y}}z^{j_{z}}\left(1+\kappa\right)\,|\,j_{\kappa}=n\right\} \label{eq:layer_edge}
\end{equation}
for $n\in\mathbb{Z}$. Each layer corresponds to a 2D checkerboard,
with $\mathcal{E}_{n}^{\kappa}$ mapped to vertices and gray (blank)
cubes to gray (white) squares. The edge variable on $\ell=v\left(1+\kappa\right)$
is $Z_{\ell}\coloneqq Z_{v}Z_{v\kappa}$. Let $\left(Z|\mathcal{E}_{n}^{\kappa}\right)$
denote the configuration of edge variables $Z_{\ell}$ for $\ell\in\mathcal{E}_{n}^{\kappa}$.
The change of $\left(Z|\mathcal{E}_{n}^{\kappa}\right)$ under $A_{c}$
is like that of $\left(Z|\mathcal{D}\right)$ under some $A_{s}$,
if $c$ is adjacent to $\mathcal{E}_{n}^{\kappa}$. We can thus assign
a phase $\phi_{c}\left(Z|\mathcal{E}_{n}^{\kappa}\right)\in\mathrm{U}\left(1\right)$
to this change as in the double semion model. We may require $\phi_{c}\left(Z|\mathcal{E}_{n}^{\kappa}\right)=1$
if $c$ is not adjacent to $\mathcal{E}_{n}^{\kappa}$, for $\left(Z|\mathcal{E}_{n}^{\kappa}\right)$
is unchanged by $A_{c}$ in this situation. Similarly, for the change
of $\left(Z|\mathcal{E}_{n}^{\kappa}\right)$ under $X_{v}$, there
is a phase $\gamma_{v}\left(Z|\mathcal{E}_{n}^{\kappa}\right)\in\text{U}\left(1\right)$
associated analogously as for $\mathcal{S}_{ss'}$ in Eqs.~(\ref{eq:S24})
and (\ref{eq:S47}).

The twisted checkerboard model $H_{\mathrm{cb}}^{x}$ is obtained
by twisting all $x$-layers, namely,
\begin{equation}
A_{c}\rightarrow A_{c}^{x}=A_{c}\prod_{n}\phi_{c}\left(Z|\mathcal{E}_{n}^{x}\right)\equiv A_{c}\Phi_{c}^{x}\qquad\mathrm{and}\qquad X_{v}\rightarrow X_{v}^{x}=X_{v}\prod_{n}\gamma_{v}\left(Z|\mathcal{E}_{n}^{x}\right)\equiv X_{v}\Gamma_{v}^{x},\label{eq:AX}
\end{equation}
where $\Phi_{c}^{x}\coloneqq\prod_{n}\phi_{c}\left(Z|\mathcal{E}_{n}^{x}\right)$
and $\Gamma_{v}^{x}\coloneqq\prod_{n}\gamma_{v}\left(Z|\mathcal{E}_{n}^{x}\right)$
for short. For each $A_{c}^{x}$ (and $X_{v}^{x}$), only the two
layers adjacent to $c$ (respectively, $v$) contribute a nontrivial
factor. Moreover, $\Phi_{c}^{x}$ and $\Gamma_{v}^{x}$ are local
operators. 
The explicit expressions of $A_{c}^{x}$ and $X_{v}^{x}$
are given in the main text and in Sec.~\ref{subsec:derivationFS}
respectively, obtained based on the correspondence between $\mathcal{E}_{n}^{x}$
and $\mathcal{D}$ in Fig.~\ref{fig:DS}(b). The algebraic properties
of $A_{c}^{x}$ and $X_{c}^{x}$ can be derived analogously as in
Sec.~\ref{sec:TGT}. In particular, one has $(A_{c}^{x})^{2}=1$
and $\left[A_{c}^{x},B_{c'}\right]=\left[A_{c}^{x},X_{v}^{x}\right]=0$,
which enable us to analyze the model $H_{\mathrm{cb}}^{x}$ as a non-Pauli
stabilizer code. 

For concreteness, we illustrate how the expression of $A_{c}^{x}$ in the main text is obtained. On the double semion side, label the qubits by $\mu^{j}\nu^{k}$ 
as in Fig.~\ref{fig:DS}(a). Then, for example, $Z_{13}$ and $Z_{14}$
are denoted $Z_{\overline{\mu}\nu}$ and $Z_{\nu}$ respectively below.
The controlled-$Z$ gate on them can be expressed
as
\begin{equation}
C\left(Z_{13},Z_{14}\right)=C\left(Z_{\overline{\mu}\nu},Z_{\nu}\right)=\left(-1\right)^{n_{\overline{\mu}\nu}^{-}n_{\nu}^{-}},
\end{equation}
where $C\left(\cdot,\cdot\right)$ is defined by Eq.~\eqref{eq:C}
and $n_{v}^{-}\coloneqq\frac{1}{2}\left(1-Z_{v}\right)$. Similarly,
we have $C\left(Z_{14},-Z_{24}\right)=\left(-1\right)^{n_{\nu}^{-}n_{\mu\nu}^{+}}$
with $n_{\mu\nu}^{+}\coloneqq\frac{1}{2}(1+Z_{\mu\nu})$ etc. Thus, $\tilde{A}_{4}$ in Eq.~\eqref{eq:Atwisted} can be
reformulated as
\begin{equation}
\tilde{A}_{4}=A_{4}\left(-1\right)^{n_{\overline{\mu}\nu}^{-}n_{\nu}^{-}+n_{\nu}^{-}n_{\mu\nu}^{+}+n_{\mu\nu}^{+}n_{\mu^{2}\nu}^{-}+n_{\overline{\mu}}^{-}n_{1}^{-}+n_{1}^{-}n_{\mu}^{+}+n_{\mu}^{+}n_{\mu^{2}}^{-}}\left(-1\right)^{n_{\mu\nu\left(1+\mu\right)}^{-}n_{\mu\left(1+\mu\right)\left(1+\nu\right)}^{-}}
\end{equation}
where $n_{\mu\nu\left(1+\mu\right)}^{-}\coloneqq\frac{1}{2}\left(1-Z_{\mu\nu\left(1+\mu\right)}\right)$
with $Z_{\mu\nu\left(1+\mu\right)}\coloneqq Z_{\mu\nu}Z_{\mu^{2}\nu}$ and
in general $n_{\kappa}^{\pm}\coloneqq\frac{1}{2}\left(1\pm Z_{\kappa}\right)$
with $Z_{\kappa}\coloneqq\prod_{v\in\kappa}Z_{v}$ for any finite
set $\kappa$ of qubits. Moreover, $\mathcal{A}_{4}$ in Eq.~\eqref{eq:Atwisted}
takes the form
\begin{equation}
\mathcal{A}_{4}=A_{4}\left(-1\right)^{n_{\overline{\mu}}^{-}n_{1}^{-}+n_{\overline{\mu}\nu}^{-}n_{\nu}^{-}+n_{1}^{-}n_{\mu}^{+}+n_{\nu}^{-}n_{\mu\nu}^{+}+n_{\mu}^{+}n_{\mu^{2}}^{-}+n_{\mu\nu}^{+}n_{\mu^{2}\nu}^{-}}\cdot\left(-1\right)^{n_{\mu\nu\left(1+\mu\right)}^{-}n_{\mu\left(1+\mu\right)\left(1+\nu\right)}^{-}}\cdot i^{-n_{\left(1+\overline{\mu}\right)\left(1+\nu\right)}^{-}}. \label{eq:A4}
\end{equation}
Based on the correspondence between $\mathcal{E}_{n}^{x}$
and $\mathcal{D}$ in Fig.~\ref{fig:DS}(b), $A_{c}^{x}$ in the twisted checkerboard model $H_{\mathrm{cb}}^{x}$
can be expressed as $A_{c}^{x}=A_{c}\Phi_{c}^{x}=A_{c}\phi_{\left(1+x\right)\overline{x}c}\phi_{\left(1+x\right)xc}$,
where $\phi_{\left(1+x\right)\overline{x}c}$ and $\phi_{\left(1+x\right)xc}$
denote the twisting factors associated with the $x$-layers containing
$x$-edges whose vertices are $\ell=\left(1+x\right)\overline{x}c$ and $\ell=\left(1+x\right)xc$
respectively. Explicitly, 
\begin{equation}
\phi_{\ell}\coloneqq\left(-1\right)^{n_{\ell\overline{y}}^{-}n_{\ell}^{-}+n_{\ell\overline{y}z}^{-}n_{\ell z}^{-}+n_{\ell}^{-}n_{\ell y}^{+}+n_{\ell z}^{-}n_{\ell yz}^{+}+n_{\ell y}^{+}n_{\ell y^{2}}^{-}+n_{\ell yz}^{+}n_{\ell y^{2}z}^{-}}\cdot\left(-1\right)^{n_{\ell yz\left(1+y\right)}^{-}n_{\ell y\left(1+y\right)\left(1+z\right)}^{-}}\cdot i^{-n_{\ell\left(1+\overline{y}\right)\left(1+z\right)}^{-}}.
\end{equation}
which is obtained from the twisting factor of Eq.~\eqref{eq:A4}
by replacing $Z_{v}$ (which appears in $n_{v}^{\pm}\coloneqq\frac{1}{2}\left(1\pm Z_{v}\right)$
etc) with edge variable, e.g., $Z_{\ell v}=Z_{\overline{x}cv}Z_{cv}$
for $\ell=\left(1+x\right)\overline{x}c$. As illustrated in Fig.~\ref{fig:DS}(b),
for $c=1$ (the dashed magenta cube), $\phi_{\left(1+x\right)\overline{x}c}$
is associated with the $x$-edges of the two darker gray cubes.

Twisting factors $\Phi_{c}^{y}$, $\Gamma_{v}^{y}$, $\Phi_{c}^{z}$,
and $\Gamma_{v}^{z}$ along $y$-layers and $z$-layers are obtained
from $\Phi_{c}^{x}$ and $\Gamma_{v}^{x}$ by the 3-fold rotation
that permutes $x$, $y$, and $z$ directions. They are used in defining
the remaining twisted checkerboard models mentioned in the main text.
For instance, $H_{\mathrm{cb}}^{xy}$ is obtained by the replacement
$A_{c}\rightarrow A_{c}^{xy}=A_{c}\Phi_{c}^{x}\Phi_{c}^{y}$.

Note that one may obtain different twisting factors if $\mathcal{E}_{n}^{\mu}$
with $\mathcal{D}$ are matched in a different way, but this will
not result in new phases. As an illustration, let $\Phi_{c}^{\overline{y}}$
and $\Gamma_{v}^{\overline{y}}$ denote the twisting factors obtained
from the $\mathcal{E}_{n}^{y}$-$\mathcal{D}$ correspondence in Fig.~\ref{fig:DS}(c),
where $\overline{y}$ refers to $\hat{\mu}\times\hat{\nu}=-\hat{y}$.
Although $\Phi_{c}^{\overline{y}}\neq\Phi_{c}^{y}$, the change of
$\mathcal{E}_{n}^{\mu}$-$\mathcal{D}$ correspondence is equivalent
to a retriangulation of $\mathcal{D}$ and hence, as in 2D \cite{Hu2014},
we may use a Pachner-move-motivated local unitary transformation to
identify the corresponding quantum phases.

\subsection{Two derivations of semionic fracton self-statistics\label{subsec:derivationFS}}

In this subsection, two derivations are given to show that the $B$
excitation of $H_{\mathrm{cb}}^{x}$, denoted $m_{x}$, is a semionic
fracton. 

One approach is to work out the details of the twisted X operator
for generating fractional moves of $B$ excitations explicitly. First,
using Eq.~(\ref{eq:AX}) and the $\mathcal{E}_{n}^{x}$-$\mathcal{D}$
correspondence in Fig.~\ref{fig:DS}(b), we obtain
\begin{equation}
X_{v}^{x}=X_{v}\varphi_{\left(1+\overline{x}\right)v}\varphi_{\left(1+\overline{x}\right)xv},\label{eq:Xx-2}
\end{equation}
where $\left(1+\overline{x}\right)v$ and $\left(1+\overline{x}\right)xv$
correspond to the two $x$-edges connecting to $v$. For $\ell$ ($\tilde{\ell}$)
of the from $\left(1+\overline{x}\right)x^{j}y^{k}z^{l}$ with $j+k+l$
even (odd), 
\begin{align}
\varphi_{\ell} & \coloneqq i^{-n_{\ell y}^{-}}\left(-1\right)^{n_{\ell}^{-}n_{\ell\overline{y}}^{-}+n_{\ell+\ell y}^{+}n_{\ell\left(1+y\right)\overline{z}}^{-}+n_{\ell+\ell\overline{y}}^{-}n_{\ell\left(1+\overline{y}\right)z}^{-}},\label{eq:Xx_0}\\
\varphi_{\tilde{\ell}} & \coloneqq i^{n_{\tilde{\ell}y}^{-}}\left(-1\right)^{n_{\tilde{\ell}}^{-}n_{\tilde{\ell}\overline{y}}^{-}+n_{\tilde{\ell}+\tilde{\ell}\overline{y}}^{+}n_{\tilde{\ell}\left(1+\overline{y}\right)\overline{z}}^{-}+n_{\tilde{\ell}+\tilde{\ell}y}^{-}n_{\tilde{\ell}\left(1+y\right)z}^{-}},\label{eq:Xx_1}
\end{align}
where $n_{\kappa}^{\pm}\coloneqq\frac{1}{2}\left(1\pm Z_{\kappa}\right)$
with $Z_{\kappa}\coloneqq\prod_{v\in\kappa}Z_{v}$ for any finite
set of vertices $\kappa$ (written as a formal sum). The commutation
relation 
\begin{equation}
\!\!X_{v}^{x}X_{v'}^{x}=\begin{cases}
X_{v'}^{x}X_{v}^{x}\prod_{c\ni v,v'}B_{c}, & v^{\prime}v^{-1}\notin xy\text{-plane},\\
\left(-1\right)^{v}iX_{v'}^{x}X_{v}^{x}, & v^{\prime}v^{-1}=xy,x\overline{y},\\
-\left(-1\right)^{v}iX_{v'}^{x}X_{v}^{x}, & v^{\prime}v^{-1}=\overline{x}y,\overline{xy},\\
X_{v'}^{x}X_{v}^{x}, & \mathrm{otherwise}
\end{cases}\label{eq:commutation_Xx}
\end{equation}
can be established by direct computation, where $c\ni v,v'$ labels
colored cubes which contain both $v$ and $v'$, and $\left(-1\right)^{v}\coloneqq\left(-1\right)^{j+k+l}$
for $v=x^{j}y^{k}z^{l}$. In addition, we have $\left(X_{v}^{x}\right)^{2}=\prod_{c\ni v}B_{c}$
and hence $X_{v}^{x\dagger}=X_{v}^{x}\prod_{c\ni v}B_{c}$. One can
use $X_{v}^{x}$ to flip the four $B_{c}$ terms on the cubes connecting
to $v$. This generates fractional moves for windmill processes explicitly.
For $m_{x}$ (i.e., a $B_{c}=-1$ excitation of $H_{\mathrm{cb}}^{x}$),
utilizing Eq.~(\ref{eq:commutation_Xx}), we obtain
\begin{equation}
\theta_{m^{x}}^{[xyz]}=\theta_{m^{x}}^{[x\overline{yz}]}=i\qquad\mathrm{and}\qquad\theta_{m^{x}}^{[\overline{xy}z]}=\theta_{m^{x}}^{[\overline{x}y\overline{z}]}=-i\label{eq:hx1}
\end{equation}
by a straightforward computation.

The second approach provides a fast and intuitive way to write down
fracton self-statistics in twisted models. In this approach, to get
the self-statistics of $m_{x}$ in $H_{\mathrm{cb}}$, we consider
another model $h_{\mathrm{cb}}^{x}$ obtained by twisting the checkerboard
model along $x$-layers only in the $x>0$ region. Let $b_{\pm}$
denote one $B_{c}=-1$ excitation of $h_{\mathrm{cb}}^{x}$ at $\pm\left(l,l,l\right)$
for $l>0$. Since $h_{\mathrm{cb}}^{x}$ is twisted as $H_{\mathrm{cb}}^{x}$
for $x>0$ and untwisted for the $x<0$ region, $b_{+}$ has the same
self-statistics as $m_{x}$ in $H_{\mathrm{cb}}^{x}$ and meanwhile
$b_{-}$ has purely bonsonic self-statistics. Moreover, note that
both $S_{b_{+}b_{-}}^{\mu}$ and $S_{b_{-}b_{+}}^{\mu}$ are trivial
for $\mu=x,y,z$. Hence we get $\theta_{b_{+}\times b_{-}}^{[\tau]}=\theta_{b_{+}}^{[\tau]}\theta_{b_{-}}^{[\tau]}$
by Eq.~(\ref{eq:6}) (i.e., Eq.~\eqref{eq:FusionExchange} of the main text) for $\tau=xyz$,
and analogously also the correspondents for $\tau=\overline{xy}z,x\overline{yz}$,
and $\overline{x}y\overline{z}$. Altogether, we have
\begin{equation}
\theta_{m^{x}}^{[\tau]}=\theta_{b_{+}}^{[\tau]}=\theta_{b_{+}\times b_{-}}^{[\tau]}\label{eq:hx2}
\end{equation}
for $\tau=xyz,\overline{xy}z,x\overline{yz}$, and $\overline{x}y\overline{z}$.
This greatly simplifies the computation of $\theta_{m^{x}}^{[\tau]}$
because $b_{+}\times b_{-}$ is reducible into three dipolar planons
(analogous to those in Eq.~(\ref{eq:dipoles})). In $h_{\mathrm{cb}}^{x}$,
one of the planons is semionic, which implies Eq.~(\ref{eq:hx1})
immediately by using Eq.~(\ref{eq:hx2}). The idea of the second
approach, which simplifies the computation of a fracton's statistics
by combining of twisted and untwisted phases, applies to other twisted
models of foliated fractons.

\subsection{Classification of twisted checkerboard models }

In this subsection, we show how local unitary transformations can
be obtained for establishing the classification, claimed in the main
text, of twisted fraction models. Recall that two gapped models belong
to the same phase if and only if their ground states are related by
a local unitary transformation \cite{LocalUnitary}. 

Let us illustrate this by explaining how to find a local unitary transformation
connecting $H_{\mathrm{cb}}^{xy}$ and $H_{\mathrm{cb}}$. For convenience,
we use the $\mathcal{E}_{n}^{x}$-$\mathcal{D}$ and $\mathcal{E}_{n}^{y}$-$\mathcal{D}$
correspondences in Figs.~\ref{fig:DS}(b) and (c) for twisting construction.
Denote the corresponding model by $H_{\mathrm{cb}}^{x\overline{y}}$.
As discussed in the last paragraph of Sec.~\ref{subsec:ConstructionTCB},
it only differs from $H_{\mathrm{cb}}^{xy}$ by a local unitary transformation.
For $H_{\mathrm{cb}}^{x\overline{y}}$, the twisted $X$ operator
takes the form $X_{v}^{x\overline{y}}=X_{v}\Gamma_{v}^{x}\Gamma_{v}^{\overline{y}}$,
where $\Gamma_{v}^{x}$ and $\Gamma_{v}^{\overline{y}}$ are related
by the reflection that swaps $x$ and $y$. By analogy to the commutation
relation of $X_{v}^{x}$ in Eq.~(\ref{eq:commutation_Xx}), one has
that $X_{v}^{x\overline{y}}$ and $X_{v'}^{x\overline{y}}$ anticommute
for $v'=\left(xy\right)^{\pm1}v$ and commute otherwise. Therefore,
the operator $\mathscr{X}_{v}\coloneqq X_{v}^{x\overline{y}}Z_{xyv}$
has the same algebraic properties as $X_{v}$. Hence there should
exist a local unitary transformation $U$ that maps $X_{v}$ to $\mathscr{X}_{v}$
while keeping $Z_{v}$ fixed. 

Working out the details explicitly, we have
\begin{equation}
\mathscr{X}_{v}\coloneqq X_{v}^{x\overline{y}}Z_{xyv}=X_{v}i^{-\left(-1\right)^{v}\left(n_{\overline{x}yv}^{-}+n_{x\overline{y}v}^{-}\right)}\left(-1\right)^{n_{yv}^{-}n_{xyv}^{-}+n_{xv}^{-}n_{xyv}^{-}+n_{v}^{-}n_{x\overline{y}v}^{-}+n_{\overline{y}v}^{-}n_{xv}^{-}+n_{\overline{x}yv}^{-}n_{v}^{-}+n_{\overline{x}v}^{-}n_{yv}^{-}+n_{\overline{xy}v}^{-}n_{\overline{x}v}^{-}+n_{\overline{xy}v}^{-}n_{\overline{y}v}^{-}}.
\end{equation}
It implies that the transformation $X_{v}\mapsto\mathscr{X}_{v}$
can be realized by the local unitary 
\begin{align}
U & \coloneqq\prod_{v}i^{-\left(-1\right)^{v}n_{\overline{x}yv}^{-}n_{v}^{-}}\left(-1\right)^{n_{\overline{xy}v}^{-}n_{\overline{x}v}^{-}n_{v}^{-}+n_{\overline{xy}v}^{-}n_{\overline{y}v}^{-}n_{v}^{-}}.
\end{align}
Furthermore, by a straightforward but tedious calculation, we can show $UH_{\mathrm{cb}}U^{\dagger}\simeq H_{\mathrm{cb}}^{x\overline{y}}$,
where $\simeq$ means that two operators acting identically on states satisfying $B_{c}=1,\forall c$. Thus, the ground states of $H_{\mathrm{cb}}$, $H_{\mathrm{cb}}^{x\overline{y}}$, and hence also $H_{\mathrm{cb}}^{xy}$ are related by local unitary transformations. They all belong to the same fracton phase. 

Remarkably, by the above discussion, we proved that twistings in two different directions are related and can even cancel each other through a local unitary transformation. For us, this fact was so counterintuitive initially that we did not even think that it could be true until we calculated the fracton self-statistics. 
Combined with this fact, the presence and absence of semionic fracton indeed yields the classification stated in the main text about twisted checkerboard models.

\section{Mobility cones\label{sec:mobility-cones}}

In this section, we provide a mathematical treatment of \emph{mobility
	cones}, together with necessary preliminaries.

\subsection{Mathematical preliminaries}
\begin{defn}
	A set $K\subseteq\mathbb{R}^{d}$ is called a \emph{cone} if  $\alpha v\in K,\forall v\in K,\forall\alpha\in\mathbb{R}_{\geq0}$.
\end{defn}
\begin{defn}
	A cone $K\subseteq\mathbb{R}^{d}$ is called \emph{convex} if $\alpha_{1}v_{1}+\alpha_{2}v_{2}\in K,\forall v_{1},v_{2}\in K,\forall\alpha_{1},\alpha_{2}\in\mathbb{R}_{\geq0}$.
\end{defn}
\begin{defn}
	Given a set $S\subseteq\mathbb{R}^{d}$, the \emph{conical hull }of
	$S$ is denoted $\mathrm{cone}\left(S\right)$ and defined as 
	\begin{equation}
	\mathrm{cone}\left(S\right)\coloneqq\left\{ \sum_{j=1}^{n}\alpha_{j}v_{j}|v_{j}\in S,0\leq\alpha_{j}\in\mathbb{R},j,n\in\mathbb{N}\right\} .
	\end{equation}
	For $S=\emptyset$, $\mathrm{cone}\left(S\right)\coloneqq\left\{ \mathbf{0}\right\} $.
	Each element of $\mathrm{cone}\left(S\right)$ is called a \emph{conical
		combination} of $S$. 
\end{defn}
\begin{rem}
	It is clear that $\mathrm{cone}\left(S\right)$ is a convex cone.
	Actually, $\mathrm{cone}\left(S\right)$ is the smallest convex cone
	containing $S$. 
\end{rem}
The above basic definitions about convex cones may be found in the
mathematical literature, e.g., Ref.~\cite{boyd2004convex}.

\subsection{Definition of mobility cones}

We now define the notion of \emph{mobility cones} for quasiparticles
(especially for fractons) in a gapped phase. Let $\mathbb{E}^{3}$
be the Euclidean space where physical degrees of freedom live. We
view $\mathbb{R}^{3}$ as the group of translations. For $K\subseteq\mathbb{R}^{3}$
and $\mathscr{C}\subseteq\mathbb{E}^{3}$, let 
\begin{equation}
\mathscr{C}+K\coloneqq\left\{ x+v|x\in\mathscr{C},v\in K\right\} \subseteq\mathbb{E}^{3}.
\end{equation}
Let $\mathfrak{q}$ denote a quasiparticle superselection sector (i.e.,
a particle type up to local operations). 
\begin{defn}
	A closed convex cone $K\subseteq\mathbb{R}^{3}$ is called a\emph{
		mobility cone} for $\mathfrak{q}$ (and any quasiparticle in this
	superselection sector) if there exists a finite region $\mathscr{C}\subseteq\mathbb{E}^{3}$
	such that $\mathfrak{q}$ can be realized by excitations supported
	inside $\left(\mathscr{C}+K\right)\backslash\left(-n,n\right)^{3}$,
	$\forall n\in\mathbb{N}$, where $\left(-n,n\right)^{3}$ denotes
	a cube centered at the origin and $\left(\mathscr{C}+K\right)\backslash\left(-n,n\right)^{3}$
	is a region with $\left(-n,n\right)^{3}$ excluded. 
\end{defn}
Physically, this means that quasiparticles in the sector $\mathfrak{q}$
can move (probably in a fractional way) to infinity within the conical
region $\mathscr{O}+K$. 
\begin{example}
	The empty set $K=\emptyset$ is a mobility cone for $\mathfrak{q}$
	if and only if $\mathfrak{q}$ is the trivial superselection sector.
	Note $\mathscr{C}+K=\emptyset$ and $\left(\mathscr{C}+K\right)\backslash\left(-n,n\right)^{3}=\emptyset$
	if $K=\emptyset$. A state with excitations supported inside $\left(\mathscr{C}+K\right)\backslash\left(-n,n\right)^{3}=\emptyset$
	simply contains no excitation at all.
\end{example}
\begin{example}
	Both $\mathrm{cone}\left(\hat{x}\right)$ and $-\mathrm{cone}\left(\hat{x}\right)$
	are mobility cones for a quasiparticle with conventional mobility
	along the $\hat{x}$-direction. 
\end{example}
\begin{example}
	Fractons do not have conventional mobility along any direction. Hence
	they allow only two- or three-dimensional mobility cones. Besides,
	a pair of cones related by inversion, $\pm K$, do not have to be
	both mobility cones for a fracton superselection sector, as we have
	noticed in the Haah's code. 
\end{example}

\subsection{Invariance of mobility cone under local unitary}

Suppose that $H$ is a gapped Hamiltonian and $U$ a local unitary.
Each superselection sector $\mathfrak{q}$ of $H$ is then mapped
by $U$ to a superselection sector $\tilde{\mathfrak{q}}$ of $UHU^{\dagger}$.
If $K$ is a mobility cone for $\mathfrak{q}$, it is clear that $K$
is also a mobility cone for $\tilde{\mathfrak{q}}$.

\subsection{Mobility cones in  Haah's code}

Consider the Haah's code $H_{\mathrm{Haah}}$. Denote the lattice
translation group by $\Lambda\coloneqq\{x^{i}y^{j}z^{k}|i,j,k\in\mathbb{Z}\}$.
Let $R\coloneqq\mathbb{Z}_{2}\Lambda$ be the group ring of $\Lambda$
over $\mathbb{Z}_{2}=\left\{ 0,1\right\} $. The elements of $R$
can be thought as Laurent polynomials in $x,y$, and $z$.

For $H_{\mathrm{Haah}}$, excitation patterns within finite regions
are labeled by $\left(a,b\right)\in R^{2}$ with $a$ and $b$ specifying
the configurations of the $A$-term and $B$-term excitations resepctively.
Two excitation patterns $\left(a,b\right)$ and $\left(a',b'\right)$
belong to the same superselection sector, and we write $\left(a,b\right)\sim\left(a',b'\right)$,
if and only if $a-a'\in I$ and $b-b'\in\overline{I}$, where 
\begin{equation}
I\equiv\left(f_{1},f_{2}\right)_{R}\coloneqq\left\{ a_{1}f_{1}+a_{2}f_{2}|a_{1},a_{2}\in R\right\} \qquad\mathrm{and}\qquad\overline{I}\equiv(\overline{f}_{1},\overline{f}_{2})_{R}\coloneqq\{a_{1}\overline{f}_{1}+a_{2}\overline{f}_{2}|a_{1},a_{2}\in R\}
\end{equation}
are two ideals of $R$, and we remind the reader that $f_1 = 1+x+y+z$ and $f_2=1+xy+xz+yz$. Thus, superselection sectors can be labeled
by $\left(R/I\right)\times(R/\overline{I})$, where $R/I$ and $R/\overline{I}$
denote quotient rings. 

In particular, $\left(a,b\right)\sim\left(0,0\right)$ and represents
the trivial superselection sector if and only if $a\in I$ and $b\in\overline{I}$.
All the \emph{nontrivial} superselection sectors are fractonic and
can be put into three groups: type-$A$, type-$B$, and mixed.
\begin{defn}
	A nontrivial superselection sector $\left(\mathfrak{q}_{A},\mathfrak{q}_{B}\right)\in\left(R/I\right)\times(R/\overline{I})$
	is called \emph{type-$A$} (respectively, \emph{type-$B$}) if $\mathfrak{q}_{B}=0$
	(respectively, $\mathfrak{q}_{A}=0$). It is called \emph{mixed} if
	$\mathfrak{q}_{A}\neq0$ and $\mathfrak{q}_{B}\neq0$. 
\end{defn}
The ideal $I\equiv\left(f_{1},f_{2}\right)_{R}$ describes the fractional
moves of $A$-term excitations. It can also be generated by $f_{1}$
and $g=\left(y+z\right)f_{1}+f_{2}=z^{2}+\left(y+1\right)y+y^{2}+y+1$.
Namely, the ideal $\left(f_{1},g\right)_{R}\coloneqq\left\{ a_{1}f_{1}+a_{2}g|a_{1},a_{2}\in R\right\} $
is identical to $\left(f_{1},f_{2}\right)_{R}$. From $g$, it is
clear that 
\begin{equation}
K_{1}\coloneqq\mathrm{cone}\left(\hat{y},\hat{z}\right),\qquad K_{2}\coloneqq\mathrm{cone}\left(\hat{z}-\hat{y},-\hat{y}\right),\qquad K_{3}\coloneqq\mathrm{cone}\left(-\hat{z},\hat{y}-\hat{z}\right)
\end{equation}
are 2D mobility cones for all type-$A$ superselection sectors. Their
spatial inversions $-K_{1}$, $-K_{2}$, and $-K_{3}$ are mobility
cones for all type-$B$ superselction sectors. See Fig.~\ref{fig:haah-1} of the
main text for a visual. In general, one may also consider 3D mobility
cones, like $\mathrm{cone}\left(\hat{x}+\hat{y},\hat{y}+\hat{z},\hat{z}+\hat{x}\right)$
for type-$A$ superselection sector (as implied by $f_{2}=1+xy+yz+zx$).
Nevertheless, for simplicity, we focus on 2D mobility cones along
the $yz$ plane, which suffice for our current purpose. 

To confirm that type-$A$ and type-$B$ can be distinguished by mobility
cones, we show that none of $-K_{i}$'s (respectively, $K_{i}$'s)
is a mobility cone of any type-$A$ (respectively, type-$B$) superselection
sector. Note that it is possible to permute the three $K_{i}$'s while
keeping $I=\left(f_{1},g\right)_{R}$ invariant using two transformations
(1) a swap $y\leftrightarrow z$ and (2) a mapping given by $x\mapsto x\overline{z}$,
$y\mapsto y\overline{z}$, and $z\mapsto\overline{z}$. Hence we only
need to consider one of $K_{i}$'s (e.g., $K_{3}$). See below. 
\begin{claim}
	The cone $-K_{3}$ is not a mobility cone for any type-$A$ superselection
	sector. 
\end{claim}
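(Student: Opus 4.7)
The plan is to argue by contradiction using a $z$-adic completion argument.

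Suppose $-K_{3}$ is a mobility cone for some nontrivial type-$A$ sector, represented by $a\in R$ with $a\notin I$. By definition there is a bounded $\mathscr{C}\subseteq\mathbb{E}^{3}$ and, for every $n\in\mathbb{N}$, a representative $a_{n}\in a+I$ whose support lies in $(\mathscr{C}+(-K_{3}))\setminus(-n,n)^{3}$. Note that by taking the $B$-component of the realization to be $0$ (allowed since $\mathfrak{q}_{B}=0$), it suffices to control the single polynomial $a_{n}\in R$.

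First I extract a geometric consequence. From $-K_{3}=\{(0,y,z):y\leq 0,\;z\geq-y\}$, every point of $\mathscr{C}+(-K_{3})$ satisfies $|x|\leq C$, $y\leq C$, $z\geq-C$, and $z+y\geq-C$ for some constant $C=C(\mathscr{C})$. Once $n>C$, the condition of lying outside $(-n,n)^{3}$ rules out $|x|\geq n$ on the support of $a_{n}$, forcing either $|y|\geq n$ or $|z|\geq n$. In the first case $y\leq -n$, and the cone inequality $z\geq -y-C$ gives $z\geq n-C$; in the second case $z\geq-C$ forces $z\geq n$. In either case every point of $\mathrm{supp}(a_{n})$ has $z$-coordinate at least $n-2C$, so the minimum $z$-degree of $a_{n}$ tends to infinity with $n$.

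Next I embed $R=\mathbb{F}_{2}[x^{\pm1},y^{\pm1},z^{\pm1}]$ into $\hat{R}\coloneqq\mathbb{F}_{2}[x^{\pm1},y^{\pm1}]((z))$, the ring of formal Laurent series in $z$ with coefficients in $\mathbb{F}_{2}[x^{\pm1},y^{\pm1}]$. The $z$-adic valuation $v_{z}$ makes $\hat{R}$ a complete topological ring. By the previous paragraph $v_{z}(a_{n})\to\infty$, so $a_{n}\to 0$ in $\hat{R}$; since $a-a_{n}\in I\subseteq I\hat{R}$ for every $n$, taking $n\to\infty$ places $a$ in the $z$-adic closure of $I\hat{R}$.

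To close the argument I would invoke two standard commutative-algebra facts: (i)~the finitely generated ideal $I\hat{R}$ is $z$-adically closed in $\hat{R}$, an Artin--Rees-type assertion applied to the explicit generators $f_{1}$ and $g$; and (ii)~the inclusion $R\hookrightarrow\hat{R}$ is flat with $I\hat{R}\cap R=I$ for every ideal $I\subseteq R$. Together these give $a\in I$, contradicting $a\notin I$. The main obstacle is precisely this final step, where closedness and the intersection identity must be verified in the Laurent graded setting rather than applied as black-box facts for an adic completion of a Noetherian local ring. A tempting shortcut---passing to the specialization $x=1$---does not work: in $R|_{x=1}/I|_{x=1}\cong\mathbb{F}_{2}[y]/(1+y)^{2}$ one finds $z^{2^{k}}\equiv 1$ for every $k\geq 1$, so the obstruction to moving along $-K_{3}$ disappears upon setting $x=1$. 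The proof must therefore keep the $x$-variable alive, which is exactly what completing $z$-adically over $\mathbb{F}_{2}[x^{\pm1},y^{\pm1}]$ accomplishes.
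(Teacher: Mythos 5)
Your geometric reduction---that the support of any representative in $(\mathscr{C}-K_{3})\setminus(-n,n)^{3}$ must have all $z$-coordinates growing unboundedly with $n$---is sound and runs parallel to the opening move of the paper's proof. But the algebraic half of your argument has a genuine gap, which you yourself flag: the two commutative-algebra facts you need, (i) that $I\hat{R}$ is closed in the $v_{z}$-topology on $\hat{R}=\mathbb{F}_{2}[x^{\pm1},y^{\pm1}]((z))$, and (ii) that $I\hat{R}\cap R=I$, are not black-box consequences of the standard theory in this setting. The map $R\hookrightarrow\hat{R}$ is flat but not faithfully flat (for instance $1-z$ becomes a unit in $\hat{R}$, so $\operatorname{Spec}\hat{R}\to\operatorname{Spec}R$ misses the prime $(1-z)$), so (ii) does not follow from flatness alone. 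An Artin--Rees/Krull-intersection route fares no better as stated: $z$ is a unit in $R$, so the $z$-adic topology on $R/I$ is trivially non-separated, and one is forced to retreat to $R_{0}=\mathbb{F}_{2}[x^{\pm1},y^{\pm1}][z]$, where Krull's theorem gives only that $\overline{a}z^{m}(1+zc)=0$ in $R_{0}/I_{0}$ for some $c$; deducing $a\in I$ from this requires showing elements of the form $1+zc$ are nonzerodivisors on $R/I$, which is precisely where the hard content lives and is not obviously easier than the original claim.

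By contrast, the paper's proof is entirely elementary and self-contained. It reduces each representative $\xi_{n}$ to a canonical form $(\xi_{n}^{\prime}z+\xi_{n}^{\prime\prime})z^{n}$ with $\xi_{n}^{\prime},\xi_{n}^{\prime\prime}$ Laurent polynomials in $y$ only, by first reducing modulo $f_{1}$ (to eliminate $x$) and then modulo $g$, and then tracks two degree invariants---the highest total degree $\deg(\xi_{n})$ and the lowest $y$-degree $\deg_{-}^{y}(\xi_{n})$---showing each is invariant under congruence modulo $I$ within the canonical form. Since $\deg(\xi_{n})-\deg_{-}^{y}(\xi_{n})\geq n$, boundedness of $\xi_{0}$ yields the contradiction directly, with no appeal to completion, flatness, or separatedness. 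Your observation that the $x=1$ specialization destroys the obstruction is a nice sanity check and is consistent with the paper's method, which critically uses $f_{1}$ (hence the $x$-variable) in the reduction step; but a correct proof needs more than one growing $z$-degree, and without a verified separatedness statement your argument does not close. As written, the proposal is incomplete.
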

\begin{proof}
	Argue by contradiction. Assume that $-K_{3}$ is a mobility cone for
	type-$A$ superselection sector $\left(\mathfrak{q}_{A},0\right)$.
	Then there exists $\xi_{n}\in\mathfrak{q}_{A}$ supported inside $\left(\mathscr{C}-K_{3}\right)\backslash\left(-n,n\right)^{3}$,
	$\forall n\in\mathbb{N}$. Note $-K_{3}=\mathrm{cone}\left(\hat{z},\hat{z}-\hat{y}\right)$.
	By using translation symmetry and the modulo operations of first $f_{1}$
	and then $g$, we can reduce $\xi_{n}$ to be of the form $\xi_{n}=(\xi_{n}^{\prime}z+\xi_{n}^{\prime\prime})z^{n}$
	where $\xi_{n}^{\prime}$ and $\xi_{n}^{\prime\prime}$ are arbitrary Laurent polynomials in $y$. Let $\deg\left(\xi_{n}\right)$ denote the highest
	total degree of $\xi_{n}$, and $\deg_{-}^{y}\left(\xi_{n}\right)$
	denote the lowest degree of $\xi_{n}$ in $y$. For example, $\deg\left(\overline{y}z+1\right)=0$
	and $\deg_{-}^{y}\left(\overline{y}z+1\right)=-1$. Note $\deg\left(\xi_{n}\right)-\deg_{-}^{y}\left(\xi_{n}\right)\geq n$.
	Moreover, $\xi_{n}\equiv\xi_{n-1}\pmod I$ implies $\deg\left(\xi_{n}\right)=\deg\left(\xi_{n-1}\right)$
	and $\deg_{-}^{y}\left(\xi_{n}\right)=\deg_{-}^{y}\left(\xi_{n-1}\right)$,
	which can be checked straightforwardly. Hence, we have $\deg\left(\xi_{0}\right)-\deg_{-}^{y}\left(\xi_{0}\right)=\deg\left(\xi_{n}\right)-\deg_{-}^{y}\left(\xi_{n}\right)\geq n,\forall n\in\mathbb{N}$.
	This contradicts the fact that $\xi_{0}$ is an excitation configuration
	inside a finite region, proving the claim.
\end{proof}
Therefore, $-K_{i}$'s (respectively, $K_{i}$'s) are not mobility
cones for type-$A$ (respectively, type-$B$) superselection sector.
Moreover, $K_{i}$'s and $-K_{i}$'s have to be combined to provide
a mobility cone for each mixed superselection sector. This is how
type-$A$, type-$B$, and mixed superselection sectors are distinguished
in the main text.

\section{Twisted Haah's code\label{sec:tHaah}}

This section presents details of the spin model for twisted Haah's
code \cite{PhysRevResearch.2.023353}. It is explicitly defined by
the Hamiltonian 
\begin{equation}
H_{\mathrm{Haah}}^{F}=-\sum_{\lambda}\left(A_{\lambda}+\tilde{B}_{\lambda}\right),
\end{equation}
which is obtained by replacing the $B$-term of the original Haah's
code $H_{\mathrm{Haah}}$ with
\begin{equation}
\tilde{B}_{\lambda}=\begin{pmatrix}t_{1}f_{2}, & t_{2}f_{1}, & f_{2}, & f_{1}\end{pmatrix},\label{eq:haah_b}
\end{equation}
where $t_{1}=x+y+z+\overline{x}y+\overline{y}z+\overline{z}x$ and
$t_{2}=xy+yz+zx+\overline{x}y+\overline{y}z+\overline{z}x$. 

The twisted model is designed such that it gives rise to an emergent
$H_{\mathrm{Haah}}$-governed fermionic gauge theory. This is mentioned
in the main text and can be understood using the exact correspondence
of degrees of freedom \cite{ShirleyFF}
\begin{equation}
X_{\sigma}\mapsto X_{\sigma}\qquad\mathrm{and}\qquad Z_{\sigma}c_{\sigma}\mapsto Z_{\sigma}T_{\sigma}
\end{equation}
between the gauge theory and the spin model, where $c_{\sigma}$ is
a product of Majorana operators, as defined in the main text, and
$T_{\sigma}$ is a product of Pauli $X$'s which correspond to the
$t$ factors in Eq.~(\ref{eq:haah_b}). In the construction, the
exact form $T_{\sigma}$ and $t$ factors are specified by the requirement
that (anti)commutation relations should be preserved during the correspondence. 

\section{\label{sec:d_Haah}Discreteness of fracton self-statistics in the
	Haah's code and its twisted variant}

In this section, we examine in greater detail the $A$-excitation
self-statistics described in the main text for the Haah's code and
its twisted variant. We show that it allows only two discrete values
$\pm1$ for both phases. 

Consider the exchange process $\theta_{a}=M_{3}^{\dagger}M_{2}M_{1}^{\dagger}M_{3}M_{2}^{\dagger}M_{1}$
of $A$-excitations $a$ (initialized at the origin $x^{0}y^{0}z^{0}\equiv1$)
and $\widehat{a}$ (initialized at $\eta_{2}$) in the Haah's code or its twisted variant,
where $M_{i}:1\rightarrow\eta_{i}$ for $i=1,2,3$ are fractional
moves along the $yz$-plane with $\eta_{i}$ located as in Fig.~\ref{fig:fractalSS}.
Making the exchange twice gives
\begin{equation}
\theta_{a}^{2}=M_{3}^{\dagger}(M_{2}M_{1}^{\dagger})M_{3}M_{2}^{\dagger}(M_{1}M_{3}^{\dagger})M_{2}M_{1}^{\dagger}(M_{3}M_{2}^{\dagger})M_{1},\label{eq:fractalS}
\end{equation}
where parentheses are inserted to notionally separate the moves of
$\widehat{a}$ from those of $a$. Note that $f_{1}=1+x+y+z$ gives
a fractional move $1\rightarrow1+\overline{x}f_{1}=\left(1+y+z\right)\overline{x}$.
It can be iterated to generate fractional moves of each $A$-excitation,
along a cyan tetrahedron illustrated in Fig.~\ref{fig:fractalSS}, from the $yz$-plane
(i.e., $x=0$) to any $x=-n$ plane for $n=1,2,3\cdots$. Thus, we
have $\eta_{i}\rightarrow\eta_{i}\left(\overline{x}+\overline{x}y+\overline{x}z\right)\rightarrow\eta_{i}\left(\overline{x}+\overline{x}y+\overline{x}z\right)^{2}\rightarrow\cdots\rightarrow\eta_{i}\left(\overline{x}+\overline{x}y+\overline{x}z\right)^{n}$.
Together with the inverse of $\eta_{j}\rightarrow\eta_{j}\left(\overline{x}+\overline{x}y+\overline{x}z\right)^{n}$,
we can deform the moves $M_{j}M_{i}^{\dagger}:\eta_{i}\rightarrow1\rightarrow\eta_{j}$
for $\widehat{a}$ into
\begin{equation}
N_{ji}:\;\eta_{i}\rightarrow\eta_{i}\left(\overline{x}+\overline{x}y+\overline{x}z\right)^{n}\rightarrow\left(\overline{x}+\overline{x}y+\overline{x}z\right)^{n}\rightarrow\eta_{j}\left(\overline{x}+\overline{x}y+\overline{x}z\right)^{n}\rightarrow\eta_{j}
\end{equation}
to bypass the origin, where $\eta_{i}\left(\overline{x}+\overline{x}y+\overline{x}z\right)^{n}\rightarrow\left(\overline{x}+\overline{x}y+\overline{x}z\right)^{n}\rightarrow\eta_{j}\left(\overline{x}+\overline{x}y+\overline{x}z\right)^{n}$
is the analogue of $\eta_{i}\rightarrow1\rightarrow\eta_{j}$ but
realized on the $x=-n$ plane. Using the original paths $M_{i}$ for
$a$ and the detoured paths $N_{ji}$ for $\widehat{a}$, we can deform
$\theta_{a}^{2}$ (Eq.~\eqref{eq:fractalS}) into a trivial process.
Thus, $\theta_{a}^{2}=1$ and hence $\theta_{a}=\pm1$. 

Actually, the above argument for $\theta_{a}=\pm1$ holds as long as $a$ allows
a mobility cone out of the plane where $\theta_{a}$ is realized.
The discreteness of $\theta_{a}$ justifies its use in distinguishing
fracton orders.

\begin{figure}
	\includegraphics[width=0.3\columnwidth]{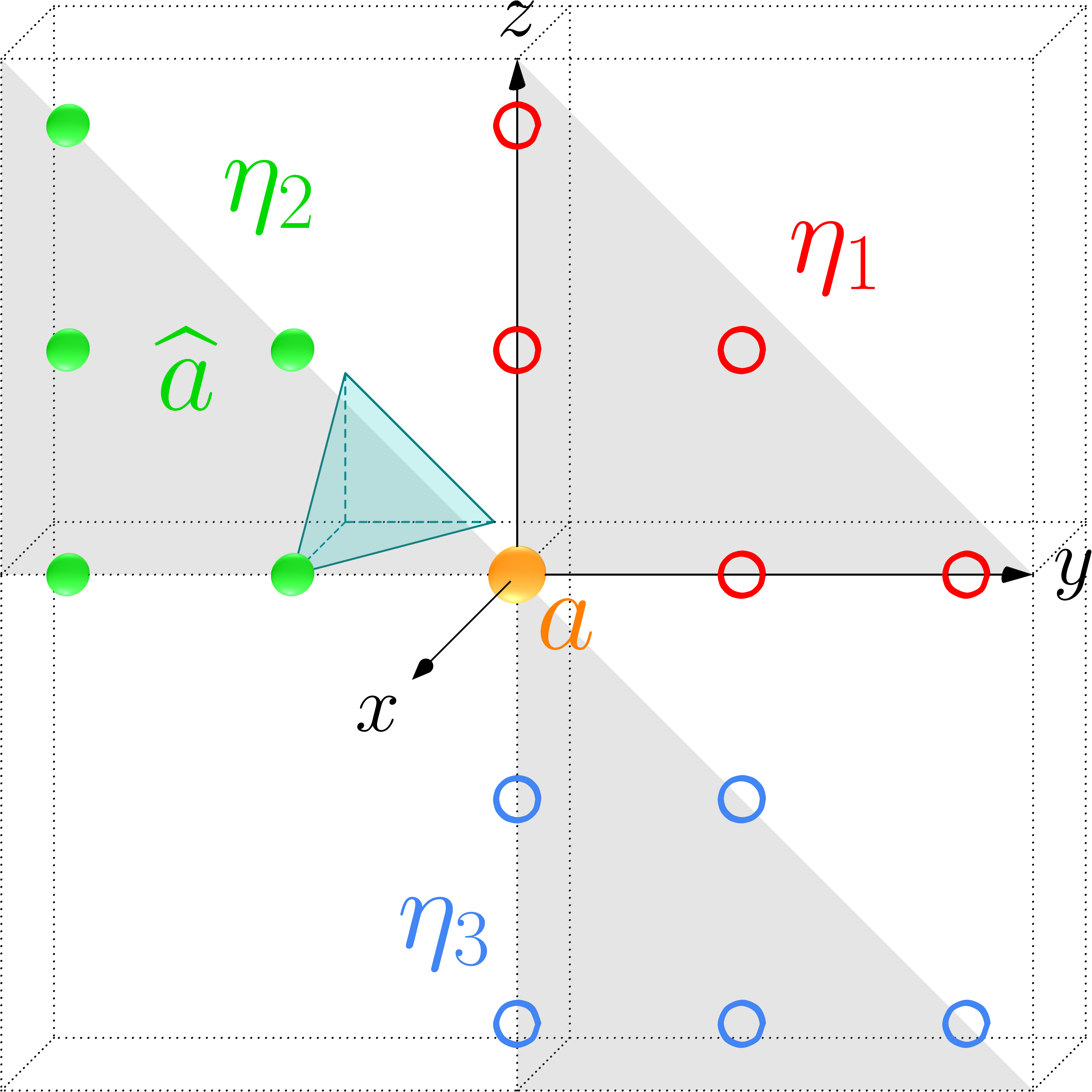}
	
	\caption{A state with $A$-excitations $a$ (at the origin) and $\widehat{a}$
		(at $\eta_{2}$) in the Haah's code or its twisted variant. Each $A$-excitation can be moved along a tetrahedron (like the cyan one) towards where $x=-n$ plane,
		where $n$ can be $1,2,3,\cdots.$}
	
	\label{fig:fractalSS}
\end{figure}	
	
\end{document}